\newtheorem{theorem}{Theorem}
\newtheorem{proposition}{Proposition}
\newtheorem{lemma}{Lemma}
\newcommand{\indep}{\rotatebox[origin=c]{90}{$\models$}}
\newcommand{\pr}[1]{\textrm{\textnormal{Pr}}\left(#1\right)}
\newcommand{\expt}{\mathbb{E}}
\newcommand{\bellmax}{\beta_{\rm max}}
\newcommand{\bellmin}{\beta_{\rm min}}
\newcommand{\bellwin}{\beta_{\rm win}}
\def\p{\textrm{Pr}}
\newcommand{\h}{h}
\newcommand{\HH}{H}
\newcommand{\HHTot}{H}
\newcommand{\NNN}{{m}}
\newcommand{\N}{m}
\newcommand{\ellInd}{i}
\newcommand{\mean}{\mathbb{E}}
\newcommand{\longS}{\mathcal{T}_{j\rightarrow i}}
\newcommand{\shortSindex}[1]{\mathcal{T}_{{#1}\rightarrow i}^i}
\newcommand{\shortS}{\mathcal{T}_{j\rightarrow i}^i}
\newcommand{\mtotalHat}{\mathbf{\hat{t}}^{\N}}
\newcommand{\bellF}{C}
\newcommand{\mbigCi}{\mathbf{C}^{i-1}}
\newcommand{\mvi}{\mathbf{c}^{i-1}}
\newcommand{\mbigTi}{\mathbf{T}^i}
\newcommand{\mbigTiprevious}{\mathbf{T}^{i-1}}
\newcommand{\mbigTtotal}{\mathbf{T}^{\N}}
\newcommand{\mti}{\mathbf{t}^i}
\newcommand{\mtiprevious}{\mathbf{t}^{i-1}}
\newcommand{\mtotal}{\mathbf{t}^{\N}}
\newcommand{\B}{\mathbb{B}_{\gamma}}
\newcommand{\Natural}{\mathbb{N}}
\newcommand{\deltaDef}{\delta}
\newcommand{\LHVMs}{\textrm{LHVM} }
\newcommand{\LHVM}{\textrm{LHVM}}
\newcommand{\pvalue}{$P$-value}
\newcommand{\pvalues}{$P$-value }
\newcommand{\myprob}{q}
\newcommand{\win}{\textrm{win}}
\newcommand{\lose}{\textrm{lose}}
\newcommand{\TT}{attempt}
\newcommand{\TTs}{attempts}
\newcommand{\ev}{trial}
\newcommand{\evs}{trials}
\newcommand{\heraldingEvent}{event-ready signal}
\begin{document}
\title{(Nearly) optimal $P$-values for all Bell inequalities}
\author{David Elkouss}
\address{QuTech, Delft University of Technology, Lorentzweg 1, 2628 CJ Delft, The Netherlands}
\author{Stephanie Wehner}
\address{QuTech, Delft University of Technology, Lorentzweg 1, 2628 CJ Delft, The Netherlands}
\email{s.d.c.wehner@tudelft.nl}

\begin{abstract}
A key objective in conducting a Bell test is to quantify the statistical evidence against a local-hidden variable model (LHVM) given that we can collect only a finite number of \evs\ in any experiment.
The notion of statistical evidence is thereby formulated in the framework of hypothesis testing, 
where the null hypothesis is that the experiment can be described by an LHVM. The statistical confidence with which the null hypothesis of an LHVM is
rejected is quantified by the so-called \pvalue, where a smaller \pvalues implies higher confidence. 
Establishing good statistical evidence is especially challenging if the number of \evs\ is small, or the Bell violation very low.
Here, we derive the optimal \pvalues for a large class of Bell inequalities. 
What's more, we obtain very sharp upper bounds on the \pvalues for \emph{all} Bell inequalities.
These values are easily computed from experimental data, and are valid even if we allow arbitrary memory in the devices. 
Our analysis is able to deal with imperfect random number generators, and event-ready schemes, even if such a scheme can create different
kinds of entangled states. Finally, we review requirements for sound data collection, and a method for combining $P$-values of independent experiments. The methods discussed here are not specific to Bell inequalities. For instance, they can also be applied to the study of certified randomness or to tests of noncontextuality.
\end{abstract}

\maketitle

%
\section{Introduction}

Local hidden variable models (\LHVM) predict concrete limitations on the statistics that can be observed in a Bell experiment~\cite{Bell_04}. These are typically phrased in terms of probabilities or expectation values. 
However, in any experiment we can only observe a finite number of trials, and not probabilities. We thus need to quantify the statistical evidence against an \LHVMs given a finite number of trials. 

The traditional way to analyze statistics in Bell experiments is to compute the number of standard deviations that separate the observed data from the best \LHVM. 
However, it is now known that this method has flaws~\cite{Gill_03b,Barrett_02,Zhang_11,Bierhorst_14} (see~\cite{Zhang_11} for a detailed discussion). In particular, 
we would have to assume Gaussian statistics and independence between subsequent \TTs, allowing for the memory loophole~\cite{Barrett_02,Gill_03b}. Fortunately, 
it is possible to rigorously analyze the statistical confidence even when allowing for memory as was first done by Gill~\cite{Gill_03a}. This is the approach that we follow here.

Instead of bounding the standard deviation, the intuitive idea behind the rigorous analysis is to bound the probability of observing the experimental data if nature was indeed governed by an \LHVM. 
In the language of hypothesis testing, this is known as the \pvalue, where the null hypothesis is that the experiment can be modelled as an \LHVMs (see e.g.~\cite{Vandam_05}).
Informally, we thus have
\begin{align}
&P\textrm{-value} = \max_{\LHVM}\nonumber\\
&\Pr[\mbox{data at least as extreme as observed}\nonumber\\
&\qquad \mid \mbox{experiment is governed by \LHVM}]\ .
\end{align}
A small \pvalues can be interpreted as strong evidence against the null hypothesis. Hence, in the case of a Bell experiment, a small \pvalues can be regarded as strong evidence against the hypothesis that the experiment was governed by an arbitrary \LHVM. 

There is an extensive literature regarding methods for evaluating the \pvalues in Bell experiments \cite{Peres_00,Barrett_02,Gill_03a,Gill_03b,Larsson_04,Acin_05, Vandam_05,Pironio_10,Zhang_10,Zhang_11, Pironio_13,Zhang_13,Bierhorst_14,Gill_14,Bierhorst_15} and discussions regarding the analysis of concrete experiments and loopholes \cite{Aspect_82,Weihs_98,Tittel_99,Rowe_01,Matsukevich_08,Ansmann_09,Scheidl_10,Giustina_13,Christensen_13,Pope_13,Bancal_14,Kofler_14,Larsson_14, Larsson_14b,Christensen_15,Knill_15}. 
Previous approaches to obtain such $P$-values known from the literature can be roughly divided into two categories. 
In the first approach, we select a suitable Bell inequality based on the expected experimental statistics
or test data collected ahead of time. After a Bell inequality is fixed, one can model the process as a (super-)martingale to which standard concentration 
inequalities~\cite{Gill_03a,Gill_03b,Larsson_04,Acin_05,Pironio_10, Pironio_13,Gill_14} can be applied. While this allows one to obtain bounds for all Bell inequalities relatively 
easily, the resulting upper bounds on the $P$-values are generally very loose. Crucially, 
this means that a much larger amount of trials would need to be collected than is actually necessary to obtain good statistical confidence. 
Figure~\ref{fig:chshcomp} 
illustrates the significance of using bounds employed in previous works compared to the bound used here.
When making a statement about all Bell inequalities below, we will also take a martingale
approach using however the much sharper concentration offered by the Bentkus' inequality~\cite{Bentkus_04}.
For some simple inequalities like Clauser-Horne-Shimony-Holt (CHSH) 
~\cite{Clauser_69} and Clauser-Horne (CH)~\cite{Clauser_74}, tight bounds on the \pvalues have been obtained 
when the measurement settings in the experiment are chosen uniformly, and no event-ready scheme is employed~\cite{Barrett_02,Bierhorst_14,Bierhorst_15}. Such a bound was first informally derived in \cite{Barrett_02}, and later rigorously developed by Bierhorst \cite{Bierhorst_14} whose approach for CHSH closely inspires our analysis of Bell inequalities that correspond to win/lose games below. 

\begin{figure}  
\begin{center}\includegraphics[width=8cm]{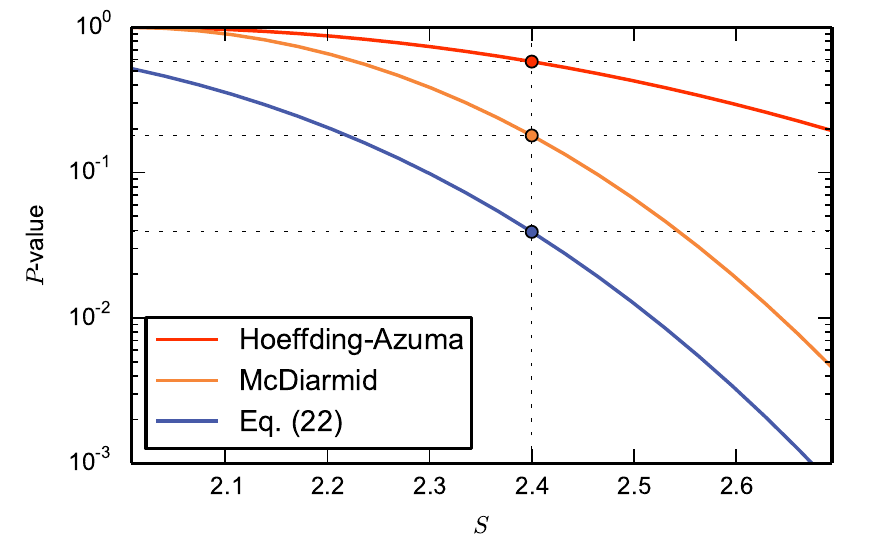}\end{center}
\caption{Comparison of $P$-value bounds for the CHSH inequality for values used in the first loophole free Bell test~\cite{Hensen_15}: The three curves show bounds on the \pvalue\ for a fixed number of \evs\ $n=245$ and random number generators bias $\tau=1.08\cdot 10^{-5}$~\cite{Abellan_14,Abellan_15}. The \pvalue\ is computed as a function of the violation $S$ which is defined as: $S=8(c/n-1/2)$, where $c$ is the number of wins in the CHSH game. 
From top to bottom, the curves show the bound on the \pvalues computed with Azuma-Hoeffding used in~\cite{Pironio_10}, McDiarmid's   
inequality~\cite{Mcdiarmid_89} given in~\cite{Zhang_13} and the upper bound from \eqref{eq:HeraldingBound} (with $\beta_\win=3/4+\tau-\tau^2$ as shown in Lemma 1 in the Appendix). 
In the Delft experiment \cite{Hensen_15} a number $c=196$ of wins were observed, giving $S=2.4$ and \eqref{eq:HeraldingBound} yields \pvalue$\approx 0.039$. The dots indicate the $P$-values
predicted by the other bounds. 
To obtain the same \pvalue\ with McDiarmid's inequality and Azuma-Hoeffding \cite{Pironio_10} the required violations would be $S=2.54$ and $S=2.98$ (beyond QM) respectively.}
\label{fig:chshcomp}
\end{figure}

The second approach that has been pursued is to combine the search for a good Bell inequality with a numerical method adapting to the data~\cite{Zhang_10,Zhang_11,Zhang_13}.
This method is asymptotically optimal in the limit of many experimental trials.
While conceptually beautiful, this numerical method can need a rather significant amount of trials to
out-perform even the somewhat loose bounds given by standard martingale concentration inequalities, and can hence only be used in regimes where the amount of trials collected
in the experiment is indeed large. 

\section{Materials and Methods}

Here we present a method for analyzing the \pvalues for Bell experiments that is optimal for large classes of Bell inequalities. This method also applies to event-ready schemes as used in~\cite{Hensen_15}, and can also deal with more complicated forms of event-ready procedures (heralding) in which different states are created in each trial (see Figure~\ref{fig:heralding}). In particular, situations in which we apply a different Bell inequality at each trial depending on which state is generated. Furthermore, we show how to bound the \pvalue\ of all Bell experiments using Bentkus' inequality which is optimal up to a small constant.

Before we can state the concept of a \pvalues more precisely, let us briefly recall the concept of a Bell inequality (see e.g.~\cite{BellSurvey} for an in-depth introduction).
For simplicity, we thereby restrict ourselves to Bell inequalities involving two sites (Alice and Bob), but all our arguments hold analogously for an arbitrary number of sites.
As illustrated in Figure~\ref{fig:gameGeneral}, in a Bell experiment we choose inputs $x$ and $y$ to Alice and Bob, and can record outputs $a$ and $b$~\footnote{Note that 
we here use the more common notation of $a$ and $b$ being outputs, and $x$ and $y$ being inputs. The roles are reversed in~\cite{Hensen_15}}. If the experiment
was governed by a \LHVM, then we could write the probabilities of obtaining outputs $a$ and $b$ given inputs $x$ and $y$ as
\begin{align}
p(a,b|x,y) = \int d\mu(h) p(a|x,h) p(b|y,h)\ ,
\end{align}
where $d\mu$ is an arbitrary measure over hidden-variables $h$, that also include any prior history of the experiment. 
The locality of the model is captured by the fact that $p(a,b|x,y,h) = p(a|x,h) p(b|y,h)$ 
if Alice and Bob are indeed space-like separated. Throughout, we refer to the supplemental material for a formally
precise notation, definitions and derivation.
A Bell inequality then states that for any LHVM
\begin{align}\label{eq:generalBell}
\bellmin \leq \sum_{x,y,a,b} s^{xy}_{ab}\ p(a,b|x,y) \leq \bellmax\ , 
\end{align}
for some numbers $s^{xy}_{ab}$. 
Evidently, in an experiment we never have access to actual probabilities $p(a,b|x,y)$. Nevertheless, Bell inequalities turn out to be very useful to establish bounds on the \pvalues above. 

\begin{figure}
\begin{center}\includegraphics[width=8cm]{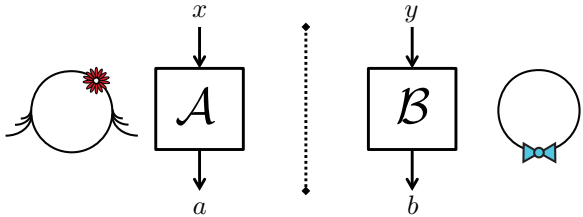}\end{center}
\caption{A Bell test involving two space-like separated sites, labelled Alice and Bob. Alice and Bob receive two randomly chosen inputs $x$ and $y$, and produce outputs $a$ and $b$. We indicate that Alice and Bob are space-like separated via the dotted line.
When testing the CHSH inequality, for example, the inputs and outputs can be taken to be single bits $x,y,a,b \in \{0,1\}$. Viewing CHSH as a non-local game, 
the winning condition is that $x \cdot y = a\oplus b$ (we use the shorthand $a\oplus b$ to denote $a+b\mod 2$). This means that in one trial of the experiment, we check whether $x \cdot y = a \oplus b$ and if yes we increment the number
$c$ of wins by $1$. 
For all Bell inequalities that are win/lose games (see Section~\ref{sec:nonlocal}), we analogously count the number of wins. 
General Bell inequalities (see Section~\ref{sec:general}) can also be cast as a game in which we do not just decide on whether Alice and Bob win or lose, but instead 
assign a score to each correct answer. In the experiment, we then compute the total score from the inputs and outputs observed. 
Our analysis is analogous for Bell inequalities involving more than two sites.
}
\label{fig:gameGeneral}
\end{figure}

Let us now rephrase this inequality in a way that will make our approach more intuitive later on.
In an experiment we choose settings with some probability $p(x,y)$, hence, it will be convenient to define 
\begin{align}
s_{ab|xy} = s^{xy}_{ab}/p(x,y)\ . 
\end{align}
For the moment, let us assume we have perfect random number generators, and that we 
choose the settings $x$ and $y$ uniformly such that $p(x,y) = p(x)p(y)$ where $p(x) = 1/N_x$
and $p(y) = 1/N_y$. The Bell inequality then reads
\begin{align}\label{eq:BellNorm}
\bellmin \leq \frac{1}{N_x N_y} \sum_{x,y,a,b} s_{ab|xy}\ p(a,b|x,y) \leq \bellmax\ .
\end{align}
The reason why this notation is convenient is because we can now think of $s_{ab|xy}$ as a score that Alice and Bob obtain when giving answers $a$ and $b$ for questions $x$ and $y$.
We thus adopt a modern formulation of Bell inequalities in terms of games~\cite{BellSurvey}. The statement that an LHVM governs the experiment then means that Alice and Bob 
can only use a local-hidden variable strategy to achieve a high score in the game. Using this formulation it is clear that the term in~\eqref{eq:BellNorm} is just the average score
that Alice and Bob can hope to achieve in the next trial.  Since the Bell inequality holds for any local-hidden variables, including the history, it is clear that playing the game $n$ times in succession, i.e., performing $n$ trials of the experiment corresponds to a classic example of martingale sequence (see supplemental material).

To analyze the experimental data we then proceed as follows: In trial $j$, we compute the score $s_{a_jb_j|x_jy_j}$ that Alice and Bob obtain for the 
inputs $x$ and $y$ and outputs $a$ and $b$ we observed in that trial. By adding all these numbers we compute the total score $c = \sum_{j=1}^n s_{a_jb_j|x_jy_j}$ after performing $n$ trials. 
The \pvalues then corresponds to 
\begin{align}
&P{\rm -value} \leq \max_{LHVM} \Pr[\mbox{Alice and Bob score C } \geq c\nonumber\\
&\qquad\qquad\qquad\qquad\qquad \mid LHVM]\ .
\end{align}
That is, the probability that Alice and Bob would obtain a score $C$ that is at least as large $C \geq c$ as the score $c$ actually observed in our experiment.

Note that the choice for the score function is not unique. The only restriction, in order to define a $\pvalue$, is that the score needs to be a valid test statistic. 
A test statistic is a function that assigns a real value to each possible experimental outcome. Then, the $\pvalue$ is the probability, under the null hypothesis, that the value of the test statistic is equal or larger to the value obtained from the observed data. There are many possible score functions that verify this restriction, though we would argue that the one used here is particularly natural.

\section{Results}
\subsection{$P$-values for win/lose games}\label{sec:nonlocal}

We first obtain optimal $P$-values for a certain class of Bell inequalities, also known as \emph{non-local games}. 
In particular, this includes the Bell inequalities phrased in terms of correlation functions such as the famous CHSH inequality~\cite{Clauser_69}.
What sets these inequalities apart is that the scores $s_{ab|xy}$ can take on only two values, which we associate with winning or losing the game.

\subsubsection{Winning probability}
To illustrate, how Bell inequalities correspond to games, let us consider the CHSH correlation function
\begin{align}
\langle A_0 B_0 \rangle + \langle A_1 B_0 \rangle + \langle A_0 B_1 \rangle - \langle A_1 B_1 \rangle\ ,
\end{align}
where $A_x$ and $B_y$ correspond to the observables measured by Alice and Bob respectively (see Figure~\ref{fig:gameGeneral}). 
Note that we can write one of the correlators as
\begin{align}
&\langle A_x B_y \rangle = \nonumber\\
&\qquad \sum_{a} p(a,b=a|x,y) - \sum_a p(a,b=a \oplus 1|x,y)\ .
\end{align}
In terms of the score function, this means that $s_{a,b|x,y} = 1$ if $a=b$ and $s_{a,b|x,y} = -1$ if $a \neq b$.
Note that
in any game in which $s_{a,b|x,y}$ can only take on these two values
we can think of the probability that Alice and Bob win for a particular choice of measurement
settings $x$ and $y$ as
\begin{align}
p(\win|x,y) &= \sum_{\substack{a,b\\s_{a,b|x,y} = 1}} p(a,b|x,y)\ ,\\
p(\lose|x,y) &= \sum_{\substack{a,b\\s_{a,b|x,y} = -1}} p(a,b|x,y)\\
& = 1 - p(\win|x,y)\ .
\end{align}
Any Bell inequality for which $s_{a,b|x,y} \in \{\pm 1\}$~\footnote{By normalizing the Bell inequality if needed.}
can thus be written
as 
\begin{align}
&\sum_{x,y} p(x,y) \left(p(\win|x,y) - p(\lose|x,y)\right) =\\
&\qquad \sum_{x,y} p(x,y) 2 p(\win|x,y) - 1 \ .
\end{align}
To draw full analogy with the usual representation of non-local games (see e.g.~\cite{BellSurvey}) 
let us normalize the scores to be $0$ and $1$ instead by defining $\hat{s}_{a,b|x,y} = s_{a,b|x,y}/2 + 1$.
We then have
\begin{align}\label{eq:winningProb}
p(\win) = \sum_{x,y} p(x,y) p(\win|x,y)\ ,
\end{align}
which is precisely the probability that Alice and Bob win the non-local game~\cite{BellSurvey}.
In this language, a Bell inequality now takes on the form
\begin{align}\label{eq:winningBound}
p(\win) \leq \bellwin\, 
\end{align}
where $\bellwin$ denotes the optimal winning probability that
can be achieved using an LHVM.
Note that if necessary, $\bellwin$ can be obtained by normalizing the given values $\bellmin,\bellmax$ appropriately.

\subsubsection{Analyzing data}

The following steps need to be taken to obtain a \pvalues for an experiment based on a non-local game, where for simplicity we first consider schemes that are not event-ready.
We refer to the supplemental material for formal definitions and derivation.

First, we determine a bound on the bias of the random number generator. We will never be able to generate settings $x$ and $y$ exactly according
to the specific distributions $p(x)$ and $p(y)$, instead we will generate the settings according to some other distributions $\tilde p(x)$ and $\tilde p(y)$. We are interested in the numbers $\tau_A$ and $\tau_B$ such that
\begin{align}
|p(x)-\tilde p(x)| &\leq \tau_A\ , \\
|p(y)-\tilde p(y)| &\leq \tau_B\ .
\end{align}
It is clear that for any physical device, these are estimates ideally supported by a theoretical device model with clearly specified assumptions.

Second, we need to obtain a bound on the winning probability using such imperfect random number generators (RNGs).
\begin{align}\label{eq:betaBound}
\tilde{p}(\win|{\rm History}) = \sum_{x,y} \tilde{p}(x,y) p(\win|x,y) \leq \tilde{\beta}_{\win}\ ,
\end{align}
that is valid for all LHVM, where we condition on the history of the experiment. 
Such a bound can be obtained analytically for many inequalities, including CHSH (see supplemental material).
In general, a bound on $\tilde{\beta}_{\win}$ can be computed numerically using a linear program (LP), when re-normalizing the score functions $\hat{s}_{a,b|x,y} \in \{0,1\}$ as above.
We remark that that this LP
has size that is exponential in the number of inputs and outputs, but can nevertheless be solved numerically when these 
are small enough which is typically the case in all experimental Bell tests. It is known that it is NP-hard to compute the
winning probability for {arbitrary non-local games~\cite{Cleve_04}.}

Third, in each of the $n$ experimental trials, we generate inputs $x$ and $y$ and record outputs $a$ and $b$.
In the end, we count the number $c$ of trials in which Alice and Bob won the game, i.e., the number of times
$s_{ab|xy} = 1$.

Finally, we compute the \pvalue. The interpretation of the \pvalues is the probability that Alice and Bob win at least $c$ times, maximized over any LHV strategy.
\begin{align}
&P{\rm -value} = 
\max_{LVHM} \Pr[\mbox{Alice and Bob win} \\
&\qquad \qquad\qquad \qquad \mbox{at least\ } c \mbox{ times }\mid LHVM]\ .\nonumber
\end{align}
As we prove in the supplemental material, for all LHVMs including arbitrary memory effects,
\begin{align}\label{eq:nlp}
P{\rm -value} \leq \sum_{\ellInd=c}^n{n\choose i} \left(\tilde{\beta}_{\win}\right)^i\left(1-\tilde{\beta}_{\win}\right)^{n-i}\ .
\end{align}
This bound is a generalization of~\cite{Barrett_02} and~\cite{Bierhorst_14} that already had given a binomial upper bound for one particular win/lose game, the CHSH game, when the RNGs are perfect, and no event ready-scheme is used.

We emphasize that this bound is tight, whenever~\eqref{eq:betaBound} is tight. 
That is, there exists a LHVM that produces at least $c$ wins with this probability, and this LHVM
does not use any memory. While a theoretical analysis is of course necessary to prove~\eqref{eq:nlp} it follows that the memory loophole~\cite{Barrett_02}
can only be exploited for general Bell inequalities, where it indeed turns out to be significant. Figures~\ref{fig:chsh} and~\ref{fig:mermin} 
illustrate this bound for the CHSH and Mermin's inequality~\cite{Mermin_90}.

\begin{figure}
\begin{center}
\includegraphics[width=8cm]{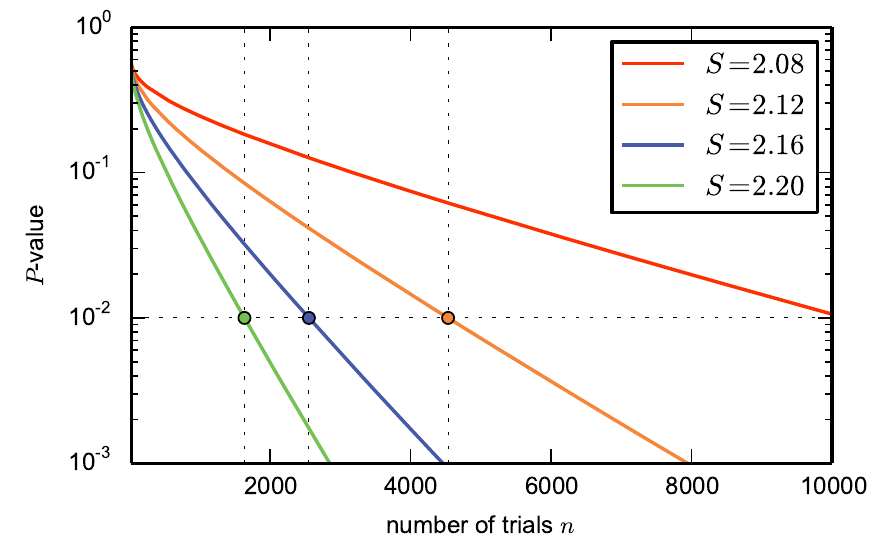}\end{center}
\caption{$P$-values for the CHSH inequality with imperfect random number generators (the bias is $\tau=1.08\cdot 10^{-5}$) in regimes where the violation is very low, but the number of trials is large. The $P$-values are computed with \eqref{eq:HeraldingBound}. The curves show the \pvalue\ as a function of the number of \evs\ for fixed violation values: $S=2.08$, $S=2.12$, $S=2.16$ and $2.20$. The dashed horizontal line is set at \pvalue$=0.01$. This line is crossed at $n=10195$, $n=4534$, $n=2552$ and $n=1635$ \evs\ respectively. }
\label{fig:chsh}
\end{figure}

\begin{figure}
\begin{center}\includegraphics[width=8cm]{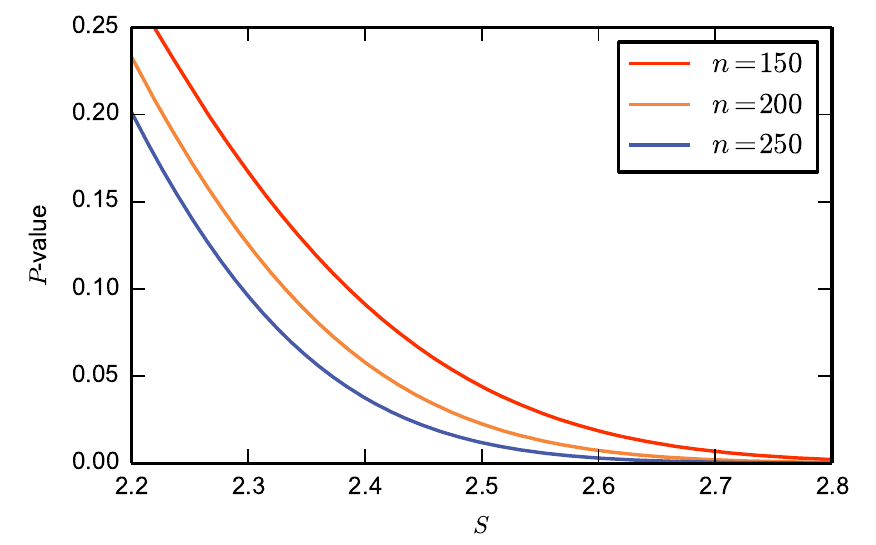}\end{center}
\caption{$P$-values for the Mermin's inequality~\cite{Mermin_90} with perfect random number generators. 
Mermin's inequality is a tripartite inequality in which each party has two inputs and two possible outputs. 
It is an example of a non-bipartite inequality that has already been violated in the laboratory \cite{Pan_00,Zhao_03,Erven_14}. 
The three parties Alice, Bob and Charlie receive three random chosen inputs $x$, $y$ and $z$ with the promise that the parity of the inputs is even, that is that the inputs are limited to $(0,1,1),(1,0,1),(1,1,0),(0,0,0)$, and produce outputs $a$, $b$ and $c$ which can also be taken to be bits. That is:  $x,y,z,a,b,c\in\{0,1\}$. The winning condition for Mermin's inequality is that $a\oplus b\oplus c=x \vee y\vee z$. That is the game is won if the xor of the outputs equals $0$ when $(x,y,z)=(0,0,0)$   
and if the xor of the outputs equals $1$ in the remaining cases. Hence we get: $s_{abc|xyz}=a\oplus b\oplus c\oplus 1$ when $(x,y,z)=(0,0,0)$ and $s_{abc|xyz}=a\oplus b\oplus c$ when $(x,y,z)\neq (0,0,0)$.
The winning probability for this game is $p({\textrm{win}})=3/4$ \cite{Brassard_04}, but note that in contrast with CHSH if Alice, Bob and Charlie share entanglement they can win with probability one. 
The curves show the \pvalue\ as a function of $S=8(c/n-1/2)$ for fixed number of trials $n$ ($c$ is the number of wins). The three curves show from top to bottom the \pvalues for $n=150$, $n=200$ and $n=250$.  
The $P$-values are computed with the binomial upper bound \eqref{eq:nlp}.}
\label{fig:mermin}
\end{figure}

\begin{figure}
\begin{center}\includegraphics[width=8cm]{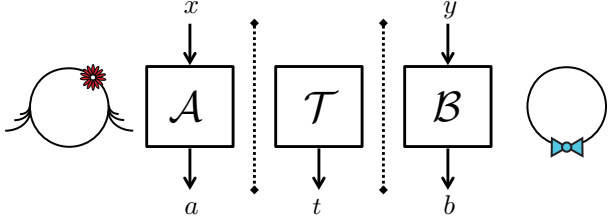}\end{center}
\caption{A Bell test using an event-ready scheme as proposed by Bell~\cite{Bell_81,Bell_04}. In an event-ready scheme, there is an additional site which we call the ``heralding station'' that is space-like separated from Alice and Bob at the time they receive their inputs (see Figure~\ref{fig:gameGeneral}). This heralding station can be under full control of the local-hidden variable model. It takes no input, but produces a tag $t$ as output. In the simplest case, $t$ is just a single bit where $t=1$ corresponds to `yes' and $t=0$ to `no'. If yes, then we check the winning condition for Alice and Bob as in Figure~\ref{fig:gameGeneral}. If no, then no record is made (i.e., the null game is played). 
In physical implementations such as~\cite{Hensen_15} this tag indicates whether an attempt to produce entanglement was successful. 
More complicated scenarios are possible, in which the tag $t$ takes on more than two-values. Depending on $t$, a particular game is played, i.e., scores are computed
according as dictated by the game labelled by $t$. In physical implementations this is interesting when two different entangled states can be created in the event-ready scheme, and
each state is best for a particular game. An example is given by CHSH, where different Bell states are created and we play two different CHSH games with $x \cdot y = a \oplus b$ or $x \cdot y = a \oplus b \oplus 1$. Using both states can improve the time scales at which statistical confidence can be obtained.}
\label{fig:heralding}
\end{figure}

\subsubsection{Event-ready schemes}\label{sec:evready}

To illustrate the analysis of event-ready schemes, let us here focus on the usual case where the tag (see Figure~\ref{fig:heralding}) can be either $t=0$ (null game, no entanglement was made) or $t=1$ (one game, one specific state was made). We will use the term attempt to refer to an attempt to create entanglement (outcome $t=0$ or $t=1$) and reserve the word trial for those in which $t=1$. In the supplemental material, we will discuss more complex versions of event-ready schemes in which different entangled states can be created, and we employ a different game for each state. 

While it is important that the random numbers are chosen independently of the tag $t$, we otherwise allow the LHVM arbitrary control over the statistics of heralding station. In particular, this means that the LHVM may use more (or less) attempts to realize $c$ wins on $n$ $\evs$ than we actually observed during the experiment. 

Specifically, 
\begin{align}
P{\rm -value} &= \max_{LHVM} \sum_{\substack{t^m\in\{0,1\}^m\\|t^m|=n}} \Pr[t^m\mid LHVM]\nonumber\\
&\qquad \qquad \Pr[C \geq c {\rm\ wins}\mid LHVM, t^m]\ ,
\end{align}
where $t^m = t_1,\ldots,t_m$, $|t^m|$ denotes the number of ones in $t^m$ and the maximization over LHVM includes a maximization over an arbitrarily large
number of attempts $m$ and heralding statistics. 
As we will formally show in the supplemental material, 
\begin{align}
\label{eq:HeraldingBound}
P{\rm -value} = \sum_{\ellInd=c}^n{n\choose i} \left(\tilde{\beta}_{\win}\right)^i\left(1-\tilde{\beta}_{\win}\right)^{n-i}\ .
\end{align}
That is, we can formally ignore the non successful \TTs. The \pvalue\ only depends on the \evs. 

\subsection{General games}\label{sec:general}

Let us now move on to considering general games, that is, games in which the score functions $s_{ab|xy}$ take
on more than two possible values. 
As before, we first need to consider the bias. 
Our bound will depend on the values of 
\begin{align}
s_{\rm max} &= \max_{a,b,x,y} s_{ab|xy}\ ,\\
s_{\rm min} &= \min_{a,b,x,y} s_{ab|xy}\ .
\end{align}
Recall that since $s_{ab|xy} = s^{xy}_{ab}/p(x,y)$ the distribution $p(x,y)$ and hence also the bias influence $s_{\rm max}$ and $s_{\rm min}$.
Second, we again compute the total score
\begin{align}
c = \sum_{j=1}^n s_{a_jb_j|x_jy_j}\ ,
\end{align}
where $x_j$, $y_j$, $b_j$ and $a_j$ are the inputs and outputs used during trial $j$ respectively. 
We then have that 
\begin{align}
P{\rm -value} = \max_{LHVM} \Pr[C \geq c|LHVM]\ ,
\end{align}
where $C$ is the random variable corresponding to obtaining a particular score using the LHVM strategy. 
Using the Bentkus' inequality, we prove in the supplemental material that
\begin{align}\label{eq:GeneralBound}
P&{\rm -value} \leq \nonumber\\
          &e\Bigg(\left( \sum_{\ellInd=\lfloor \delta\rfloor}^n{n\choose i} \left(\hat\gamma\right)^i\left(1-\hat\gamma\right)^{n-i}\right)^{1-\delta +\lfloor \delta\rfloor}\nonumber\\
          &\left(\sum_{\ellInd=\lceil \delta\rceil}^n{n\choose i} \left(\hat\gamma\right)^i\left(1-\hat\gamma\right)^{n-i}\right)^{\delta-\lfloor \delta \rfloor}\Bigg)
\end{align}
where
\begin{align}
\delta &= \sum_{i = 1}^{n} \frac{c_i - s_{\rm min}}{s_{\rm max} - s_{\rm min}}\ ,\\
\hat \gamma &= \frac{\beta_{\rm max} - s_{\rm min}}{s_{\rm max} - s_{\rm min}}\ .
\end{align}

Whenever the Bell inequality is normalized such that $s_{\rm min} = 0$ and $s_{\rm max} = 1$ this becomes
\begin{align}\label{eq:GeneralBoundNorm}
P&{\rm -value} \leq \nonumber\\
          &e\Bigg(\left( \sum_{\ellInd=\lfloor c\rfloor}^n{n\choose i} \left(\beta_{\max}\right)^i\left(1-\beta_{\max}\right)^{n-i}\right)^{1-c+\lfloor c\rfloor}\nonumber\\
          &\left(\sum_{\ellInd=\lceil c\rceil}^n{n\choose i} \left(\beta_{\max}\right)^i\left(1-{\beta}_{\max}\right)^{n-i}\right)^{c-\lfloor c\rfloor}\Bigg)
\end{align}
where $\lfloor c\rfloor$ and $\lceil c\rceil$ stand respectively for the greatest integer smaller than $c$ and the smallest integer larger than $c$.

If we treat a win/lose game as a general game we can also upper bound the $\pvalue$ by \eqref{eq:GeneralBoundNorm}. However, if we compare this formula with \eqref{eq:nlp}, we see that we have lost a factor of $e$. We have obtained a simple formula that can address general games but it is not tight. It remains unknown whether or not $e$ is the optimal prefactor, but it is known that for general games it cannot be smaller than 2 \cite{Bentkus_04}.

In some cases it is possible to transform a general game into a win/lose game by postselecting the trials that take the maximum and minimum value \cite{Gill_03b,Bierhorst_15}. In that situation, it would be possible to apply the tight bounds for win/lose games. Techniques sometimes referred to as ``speeding up time''~\cite{Gill_03b,Kofler_14} can analogously be used in conjunction with this refined bound. 

The idea behind this bound is to model an experiment as a bounded difference supermartingale, where we note that
a Bell inequality is nothing else than the expectation of the score random variable $C_j$ in trial $j$ conditioned
on the history leading up to that trial. That is,
\begin{align}\label{eq:expected}
\beta_{\rm min} \leq \mathbb{E}[C_j|{\rm History}] \leq \beta_{\rm max}\ ,
\end{align}
where the expectation is taken over all inputs $x$, $y$ and outputs $a$ and $b$. 
A (super)martingale is a concept known from probability theory (see supplementary material for details). 
A sequence $M_1,M_2,\ldots$ of random variables is known as a supermartingale, if the expectation value
of the difference $M_n - M_{n-1}$ conditioned on the history is always negative. Choosing $M_j$ to be a weighted
sum of the differences $\sum_{j=1}^n C_j - \beta_{\rm max}$ one can easily obtain such a Martingale.
The key aspect of a Martingale is that even though the subsequent variables are not independent from each other, 
we nevertheless observe a concentration akin to the law of large numbers for processes which are independent from each other.
The prime example is tossing a coin $n$ times. Indeed, thinking of ``heads'' as ``win'' and ``tails'' as ``loose'', we can easily evaluate the probability that
we get ``win'' more than $k$ times. When a process is a Martingale a similar argument holds, even if the coin can take many values and depend on the history.

Several other martingale bounds have been used in the past. We have chosen as examples McDiarmid's   
inequality~\cite{Mcdiarmid_89} as given in~\cite{Zhang_13}
\begin{align}
P{\rm -value} &\leq \bigg(\left(\frac{s_{\rm max} - \beta_{\rm max}}{s_{\rm max}-c/n}\right)^{\frac{s_{\rm max}-c/n}{s_{\rm max}-s_{\rm min}}}\nonumber\\
                   &\qquad                     \left(\frac{\beta_{\rm max}-s_{\rm min}}{c/n-s_{\rm min}}\right)^{\frac{c/n-s_{\rm min}}{s_{\rm max}-s_{\rm min}}}\bigg)^n
                    \label{eq:mcdiarmid}
\end{align}
and  Azuma-Hoeffding as used in~\cite{Pironio_10}
\begin{equation}
P{\rm -value} \leq \exp \left(-n\frac{(c/n-\beta_{\rm max})^2}{2d^2}\right)
\label{eq:azuma}
\end{equation}
where $d=\max\{\beta_{\rm max} - s_{\rm min},s_{\rm min}-\beta_{\rm min}\}$.

We provide an example of the application of these three bounds for the Collins-Gisin-Linden-Massar-Popescu (CGLMP) inequality~\cite{Collins_02} in Figure~\ref{fig:cglmp}. 

\begin{figure}
\begin{center}\includegraphics[width=8cm]{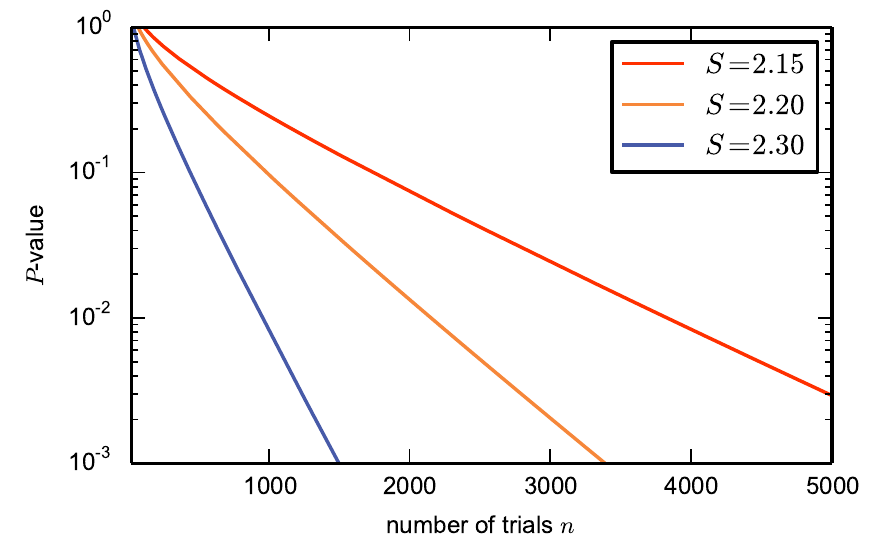}\end{center}
\caption{$P$-values for CGLMP's inequality \cite{Collins_02} with perfect random number generators. CGLMP is a sequence of bipartite inequalities in which each party has two inputs and $d\geq 2$ possible outputs. This is an example of a general game within experimental reach \cite{Dada_11}. 
The inequality is $\sum_{k=0}^{\lfloor d/2-1\rfloor}\sum_{a,x,y}\left(1-\frac{2k}{d-1}\right)(p(a,a+k+xy|x,y)-p(a,a-k-1+xy|x,y))\leq 2$. Let $k\in 0\ldots \lfloor d/2-1\rfloor$, we can extract from the inequality the score functions: $s_{ab|xy}=4\left(1-\frac{2k}{d-1}\right)$ if $b=a+k+xy$, $s_{ab|xy}=-4\left(1-\frac{2k}{d-1}\right)$ if $b=a-k-1+xy$ and $s_{ab|xy}=0$ in the remaining cases.  
The three curves show the \pvalue\ as a function of the number of \TTs\ for a fixed average score $S=\frac{1}{n}\sum_{j=1}^ns_{a_jb_j|x_jy_j}$. 
From top to bottom the curves show the \pvalues for $S=2.15$, $S=2.20$ and $S=2.30$.  
The $P$-values are computed via Bentkus' inequality \eqref{eq:GeneralBound}.}
\label{fig:cglmp}
\end{figure}

\begin{figure}
\begin{center}\includegraphics[width=8cm]{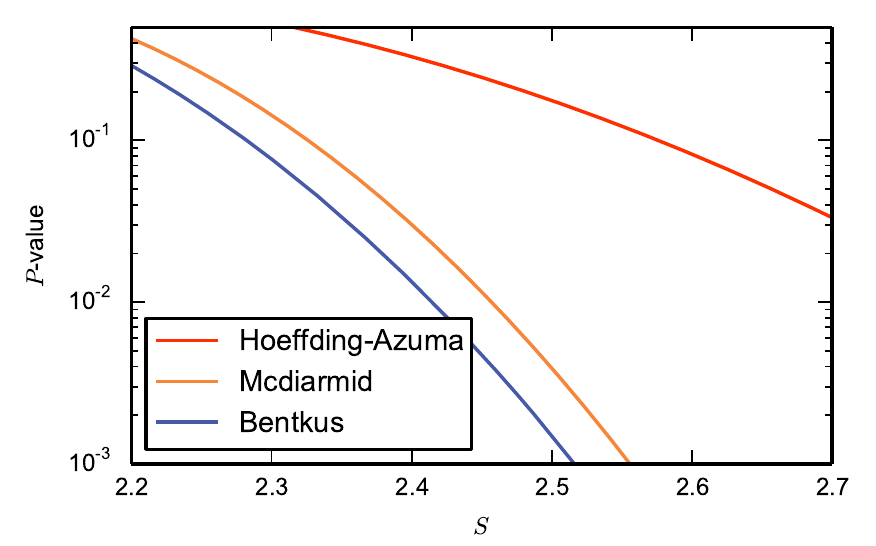}\end{center}
\caption{$P$-values for CGLMP's inequality \cite{Collins_02} with perfect random number generators. From top to bottom the curves show the \pvalues for $n=500$ $\evs$ computed with Azuma-Hoeffding, Mcdiarmid and Bentkus' inequalities. The inequalities are given by  \eqref{eq:azuma},  \eqref{eq:mcdiarmid} and  \eqref{eq:GeneralBound}.}
\label{fig:cglmpcompare}
\end{figure}

\section{Discussion}

\subsection{Before conducting the experiment}

To ensure sound data collection, there are several important considerations to make before the experiment takes place.
These are standard in statistical testing, and in essence say that the rules on how the statistical analysis is performed is decided independent of the data. 
This can be achieved by establishing those rules before the data collection starts. 
First, we choose a Bell inequality. Not all Bell inequalities lead to the same statistical confidence. In Section~\ref{sec:chooseBell} we discuss methods for obtaining a good one. While there may be future analyses that allow a partial optimization over Bell inequalities using the actual experimental data, we emphasize
that the procedure above assumes a fixed inequality has been chosen ahead of time. Second, there are two ways to deal with imperfect random number generators, and a choice should be made as discussed in Section~\ref{sec:chooseRNG}. 
Third, we assume that the number of trials to be collected is independent of the data. 
This means that we do not decide to take another few trials if the \pvalues is not yet low enough for our liking, a practise also known as \pvalues
fishing in statistics. There are ways to augment the analysis~\cite{stopping} to safely collect more data in some specific instances, but this brings many subtleties. 
A number of $\evs$ $n$ can be determined from the expected violation given prior device characterization, aiming for a particular \pvalue. 

\subsection{How to deal with imperfect RNGs}\label{sec:chooseRNG}

From the discussions above, it becomes clear that there are two ways to deal with imperfect RNGs. The first is of interest in win/lose games. If there is a bias $\tau$, then the winning probability~\eqref{eq:winningProb} simply increases. This means that when we perform an experiment based on a win/lose game in which we use an imperfect RNG, then the game remains win/lose and the bound in~\eqref{eq:nlp} still applies. Since this is a simple Binomial distribution, without any additional factor $e$ this is desirable if the bias is small. 

However, we saw from the analysis of general games that there is a second way. When considering a general Bell inequality~\eqref{eq:generalBell}, we make no statements about the probabilities of choosing settings $p(x,y) = p(x)p(y)$. 
Starting from a scoring function $s^{xy}_{ab}$ we can define $s_{ab|xy} = s^{xy}_{ab}/p(x,y)$ to introduce an explicit
dependence on the input distribution $p(x,y)$ of our choosing. 
Using RNGs with a bias then merely affects $p(x,y)$ and thus the maxima and minima of the scoring
functions $s_{ab|xy}$ which enter into the bound given in~\eqref{eq:GeneralBound}. 
It is crucial to note that when defining $s_{ab|xy}$ as above, then a win/lose game in which $p(x,y)$ was chosen 
to be perfectly uniform, can now turn into a general game. That is, we will no longer have that the scoring functions 
$s_{ab|xy}$ take on only two values. This means that we have to use the general bound~\eqref{eq:GeneralBound} carrying
the additional factor $e$, as opposed to~\eqref{eq:nlp}. 

How we deal with imperfect RNGs thus depends: if we start with a win/lose game, and if the bias is small, then it is typically advantageous to preserve the win/lose property of the game and derive a new winning probability as a function of the bias. If, however, the bias is very large, then it can be advantageous to sacrifice the win/lose property, and adopt the analysis for general games.
If the game was not win/lose to begin with, we always adopt the second method.

\subsection{Selecting a Bell inequality}\label{sec:chooseBell}
One of the main objectives of a Bell experiment is to quantify the evidence against a $\LHVM\ $, hence ideally one would like to choose a game that would yield the lowest $\pvalue\ $ for a fixed number of $\evs$. The optimization of games with this objective is a non-trivial task. A reasonable alternative which one can use as heuristic is to mazimize the 
gap between the expected score achievable in the experiment and the expected score that a $\LHVM\ $ can attain.  
In other words, we are looking for a Bell inequality for which the violation we can observe is as large as possible. To find such an inequality, standard linear programming methods can be used (see e.g.~\cite{BellSurvey}). 

To apply them we assume that a reasonably good guess is available as to what the probabilities $p(a,b|x,y)$ are in the experiment. Such a guess can be made by either analyzing data collected prior to the Bell experiment and approximating probabilities by relative frequencies, or by having sufficient confidence in the theoretical model that describes the experiment and calculating the probabilities from this model. 

Suppose that in the estimation process we find some estimates of $p(a,b|x,y)$. If such probabilities could be realized by a LHVM, then we could write them as a mixture of deterministic local strategies. To make this precise, let 
$\lambda=(a_1,\ldots,a_{|X|},b_1,\ldots,b_{|Y|})$ denote a deterministic strategy in which Alice and Bob give outputs $a_x$ and $b_y$ for inputs $x = \{1,\ldots,|X|\}$ 
and $y \in \{1,\ldots,|Y|\}$. In terms of a probability distribution, this would correspond to a distribution $d_{\lambda}(a,b|x,y)$ such that $d_{\lambda}(a,b|x,y) = 1$ if and only if $a = a_x$ and $b = b_y$ as indicated by the vector $\lambda$, and $d_{\lambda}(a,b|x,y) = 0$ otherwise.  A behaviour, that is distributions $p(a,b|x,y)$, is local if and only if 
\begin{align}\label{eq:ppDef}
p(a,b|x,y) = \sum_{\lambda} q_{\lambda}\ d_{\lambda}(a,b|x,y)\, 
\end{align}
where the sum is taken over all $|X|^{|A|} |Y|^{|B|}$ possible $\lambda$~\cite{BellSurvey}, where $|A|$ and $|B|$ denote the number of possible outputs for Alice and Bob, 
and 
\begin{align}
\forall \lambda, q_{\lambda} > 0, \mbox{\ and\ } \sum_{\lambda} q_{\lambda} = 1\ .  
\end{align}
We note that one can test whether such $q_{\lambda}$ exist, i.e. whether the behaviour is local, using a linear program~\cite{Zukowski_99,Kaszlikowski_00}.  
The dual of this linear program can be used to find a Bell inequality that certifies a behaviour $p(a,b|x,y)$ is \emph{not} local~\cite{BellSurvey}. One can easily adapt this linear program to search for an inequality that achieves a high violation. Specifically, 
\begin{align}
\textrm{maximize} & {\rm\ Violation} = \sum_{x,y,a,b} s^{xy}_{ab}\ p(a,b|x,y) - S\nonumber\\
&\sum_{x,y,a,b} s^{xy}_{ab}\ d_{\lambda}(a,b|x,y) \leq S, \forall \lambda\\
&0 \leq s^{xy}_{ab} \leq 1, \forall x,y,a,b\ ,
\end{align}
where the $p(a,b|x,y)$ and $d_{\lambda}(a,b|x,y)$ are givens, and we optimize over $s^{ab}_{xy}$ (see~\cite{BellSurvey} for details).
Note that the second constraint means that for every LHVM, we have a Bell inequality in which $\beta_{\rm max} = S$
and the difference $V$ is precisely the violation we achieve when 
normalizing the score functions to lie in the interval $[0,1]$ which can be done without loss of generality. 

It is clear from the discussion above that it can be to our advantage to search for a win/lose game, rather than a general game
since the $p$-values for such games are sharper. This can be done by optimizing over score functions in which $s^{xy}_{ab} \in \{0,1\}$. This, however, is now an integer program~\cite{Faigle_13} rather than a linear program~\cite{boyd:book}, which are in general NP-hard to solve~\cite{Faigle_13}. Nevertheless, this may be feasible for the small number of inputs and outputs used in any experimental implementation, and heuristic methods exist.

\subsection{Combining independent experiments}
Suppose that a series of $n$ experiments is run independently. Each experiment could correspond to completely different settings, Bell inequalities, etc. Associated with each experiment we obtain a series of $P$-values corresponding to the probability that each of them was governed by a \LHVM: $(p_i)_{i=1}^n$. In this situation, it is possible to take all the $P$-values associated with each one of the individual experiments and obtain a combined \pvalue. One such a method is Fisher's method \cite{Fisher_25,Elston_91}. With Fisher's method the combined \pvalues is given by the tail probability of $\chi^2_{2n}$, a chi-squared distribution with $2n$ degrees of freedom:
\begin{align}
\label{eq:fisher}
P\textrm{-value} &= \pr{\chi^2_{2n}\geq -2\log \prod_{i=1}^np_i}
\end{align}
The right hand side of this equation can be easily evaluated numerically. However, it can be shown that the tail probability of $\chi^2_{2n}$ accepts the following closed expression:
\begin{align}
\pr{\chi^2_{2n}\geq 2x}=x\sum_{i=0}^{n-1}\frac{x^i}{i!}
\end{align}
where we can choose $x=-\log\prod_{i=1}^n p_i$.

However, we make no claim of optimality regarding the combined \pvalue. There is a rich literature on methods for combining $P$-values \cite{Loughin_04} and depending on the concrete situation a different choice should be made.

\subsection{Conclusions}

We have shown how to derive (nearly) optimal $P$-values for all Bell inequalities that can easily be applied to evaluate the data collected in experiments. A suitable Bell inequality can be found as outlined above, however, it would be interesting to combine this method with the
numerical search for inequalities in~\cite{Zhang_10, Zhang_11, Zhang_13}. The latter can adaptively find the best way to discriminate between LHVMs and theories like quantum mechanics that go beyond local-hidden variables that is asymptotically optimal, but requires a significant amount of data to train. 

We note that there exist many ways to extend the methods presented here to deal with specific situations at hand, for example, by conducting multiple experiments in succession and using data from prior runs to find more suitable Bell inequalities in the next instance. 

We emphasize that the methods outlined here can be used to test other models than LHVMs. It is clear from the proof 
that only the winning probability in~\eqref{eq:winningBound}, or the expectation value~\eqref{eq:expected}, depends on the model to be tested. The argument that extends these bounds for a single trial to a bound on the $P$-value for the entire experiment allowing arbitrary memory in the devices, however, does not depend on the model tested.
In particular, this means that any theories that predict bounds of the form~\eqref{eq:winningBound} and~\eqref{eq:expected} 
are excluded with the same bound on $P$-value. This also makes it apparent how one can extend the analysis to refute models that are more powerful than an \LHVM. For example, Hall~\cite{hall} defined and quantified interesting relaxations of an \LHVM, with reduced free will, or where some amount of signalling is allowed. It is straightforward to adapt the analysis of~\cite{hall} to derive bounds on~\eqref{eq:winningBound} and~\eqref{eq:expected} 
to subsequently obtain a $P$-value for testing such extended models. Note that since
Alice and Bob obtain an advantage by allowing models such as~\cite{hall}, i.e. they are allowed more powerful strategies, hence they can achieve a higher score in the game. This implies that concrete scores will result in higher $P$-values and lower confidence.

Furthermore, while we focused the discussion on tests of Bell inequalities, our methods can also be applied to the study of certified 
randomness as in~\cite{Colbeck_07, Pironio_10,Pironio_13}, or more generally to tests of e.g. non-contextual models that can be phrased as one player games.

\acknowledgements
We thank Hannes Bernien, Peter Bierhorst, Andrew Doherty, Ana{\"i}s Dr{\'e}au, Richard Gill, Peter Gr{\"u}nwald, Ronald Hanson, Bas Hensen, Jed Kaniewski,
Laura Man{\v c}inska, Corsin Pfister, Tim Taminiau, Thomas Vidick and Yanbao Zhang for discussions and/or comments
on an earlier version of this manuscript. We also thank the referees for their careful reading and suggestions. 
DE and SW are supported by STW, Netherlands and an NWO VIDI Grant.

\onecolumngrid
\appendix

\bigskip

In this supplemental material, we formalize and prove our claims. To accomplish this, we first need to introduce more precise notation and a formal description of LHVMs in Section~\ref{sec:notation}. We then proceed to analyze win/lose games in Section~\ref{sec:winLoseProofs}, 
and general games in Section~\ref{sec:General}. 
 
\section{Preliminaries}
\label{sec:notation}

\subsection{Notation}
We will use capital letters to denote random variables: $A,B,\ldots$ and the corresponding lower case letters to denote the value that the random variable takes: $a,b,\ldots$. Instead of the notation $p(a)$ used in the main text, 
we will use the more precise form $\pr{A=a} = p(a)$. During the experiment, we perform many attempts to generate entanglement
in which we will record the inputs $a_\ellInd, b_\ellInd$, outputs $x_\ellInd$ and $y_\ellInd$, and event-ready tags $t_\ellInd$ in each attempt. We will reserve the word trial for the attempts in which $t_\ellInd \neq 0$. Note that in an experiment that does not use an event-ready procedure we always have $t_\ellInd \neq 0$. 

While we restrict our explanations to the bipartite case, we emphasize that is straightforward to extend our analysis to any number of sites and we provide a simple example of how this is done in Figure~4 in the main text. 
A single \TT\ of a (bipartite) game is characterized by the inputs that we denote by $X$ and $Y$ and the corresponding outcomes $A$ and $B$. In the case of an event-ready scheme, the outcome of the event-ready station would be denoted by $T$. Let $\Delta_\ellInd^{a,b,x,y,t}$ be an indicator function
\begin{equation}
\label{eq:bierhorst}
\Delta_\ellInd^{a,b,x,y,t}=\mathds 1\{A_\ellInd=a,B_\ellInd=b,X_\ellInd=x,Y_\ellInd=y, T_\ellInd = t\}\ .
\end{equation}
That is, $\mathds 1\{A_\ellInd=a,B_\ellInd=b,X_\ellInd=x,Y_\ellInd=y,T_\ellInd = t\}$ is itself a random variable that is a function of the random variables $A_\ellInd$,$B_\ellInd$,$X_\ellInd$,$Y_\ellInd$ and $T_\ellInd$.
It takes on the value $1$ if all equalities are satisfied for a particular choice of $a,b,x,y,t$, and $0$ otherwise.

We will let the random variable $C_\ellInd$ stand for the score of the game obtained in trial $\ellInd$. This variable is defined as function of the coefficients $s_{ab|xy}$ 
that characterize the game
\TT\ as
\begin{align}\label{eq:CDef}
C_\ellInd = \sum_{a,b,x,y,t} \Delta_\ellInd^{a,b,x,y,t}\cdot s_{ab|xyt}\ .
\end{align}
Depending on the event-ready tag $t$, a different game might be played, which implies that $s_{ab|xyt}$ depend on $t$. Whenever 
the dependence on $t$ is clear, we will drop $t$ and simply write $s_{ab|xy}$
to avoid cluttering the notation.
The concrete instance of the $\ellInd$-th \TT\ we denote by
\begin{align}
c_\ellInd = c_\ellInd(a,b,x,y,t)=s_{ab|xyt}\ .
\end{align}
We will often drop the dependence on $(a,b,x,y,t)$ by writing $c_\ellInd$
in order to lighten the notation.
The term concrete instance means that $c_\ellInd$ can be computed from the observed data. In an experiment consisting of $m$ attempts, we first need to compute the following number, which is the total score Alice and Bob obtain
\begin{align}\label{eq:ToCompute}
c = 
\sum_{\ellInd = 1}^{m} c_\ellInd(a_\ellInd,b_\ellInd,x_\ellInd,y_\ellInd,t_\ellInd) = 
\sum_{\ellInd = 1}^{m} s_{a_\ellInd b_\ellInd|x_\ellInd y_\ellInd t_\ellInd}\ .
\end{align}
The corresponding random variable is
\begin{align}
C = \sum_{\ellInd = 1}^{m} C_{\ellInd}\ .
\end{align}
It will furthermore be convenient to define the following shorthand
\begin{align}
P_{n,k}(\mathbb{B}_\gamma) = \sum_{i=k}^n {n\choose i} \gamma^i \left(1-\gamma\right)^{n-i}\ .
\end{align}

\subsection{Local-hidden variable models}
In order to formally state the null hypothesis, we briefly need to state what LHVMs are more precisely. More details can be found
in e.g.~\cite{BellSurvey} and~\cite{Bierhorst_15}.
To do so, let us introduce the following sequences of random variables in correspondence with the concrete instances of inputs and outputs denoted by the lowercase letters above.
Let $\mathbf{X}^{\N} = (X_\ellInd)_{\ellInd=1}^\N,\mathbf{Y}^{\N} = (Y_\ellInd)_{\ellInd=1}^\N$ denote the inputs to the boxes where $\ellInd$ is used to label the $\ellInd$-th element,
$\mathbf{A}^{\N} = (A_\ellInd)_{\ellInd=1}^\N,\mathbf{B}^{\N} = (B_\ellInd)_{\ellInd=1}^\N$ the outputs of the boxes,  
 $\mathbf{\HHTot}^{\N} = (\HHTot_\ellInd)_{\ellInd=1}^\N$ the histories of attempts previous to the $\ellInd$-th attempt, $\mathbf{C}^{\N} = (C_\ellInd)_{\ellInd=1}^\N$ denotes the scores at each \TT\ and $\mathbf{T}^{\N} = (T_\ellInd)_{\ellInd = 1}^{\N}$ is the sequence of event-ready signals in the case of an event-ready experiment.  
In an event-ready experiment, we make no assumptions regarding the statistics of the event-ready station, which may be under full control of the \LHVM, and can depend arbitrarily on the history of the experiment. 

The random variable $\HHTot_\ellInd$ models the state of the experiment prior to the measurement. As such, $\HHTot_\ellInd$ includes 
any hidden variables, sometimes denoted using the letter $\lambda$ \cite{BellSurvey}. It also includes 
the history of all possible configurations of inputs and outputs of the prior \TTs\ $(X_j, Y_j, A_j, B_j, T_j)_{j=1}^{\ellInd-1}$.
However, this is the only requirement for $\HHTot_\ellInd$; that is the history may also include other aspects of the experiment. For simplicity, we assume that it is a countable random variable, though in full generality it could be defined over an arbitrary probability space. 

The null hypothesis (to be refuted) is that our experimental setup can be modeled using a \LHVM\ (see~\cite{Bierhorst_14} for more details). This model has the following properties:

\begin{enumerate}
\item \emph{Local randomness generation.}
Conditioned on the history of the experiment the inputs $X_\ellInd,Y_\ellInd$ are independent of each other
\begin{equation}
\forall \ellInd, X_\ellInd \indep Y_\ellInd\mid H_\ellInd\ ,
\end{equation}
and of the output of the event-ready signal
\begin{equation}
\forall \ellInd, X_\ellInd \indep T_\ellInd, Y_\ellInd \indep T_\ellInd \mid H_\ellInd\ .
\end{equation}
We allow $X_\ellInd$ and $Y_\ellInd$ to be partially predictable given the history of the experiment. We use $p_x$ and $p_y$ for the distribution that we are hoping to achieve using imperfect RNGs. However, we assume this target distribution to be the same for all $\ellInd$.
\begin{align}
\forall (\ellInd,x_\ellInd,h_\ellInd), p_x - \tau_A \leq \pr{X_\ellInd = x_\ellInd|H_\ellInd = h_\ellInd} &\leq p_x + \tau_A\ , \label{eq:AliceBias}\\
\forall (\ellInd,y_\ellInd,h_\ellInd), p_y - \tau_B \leq \pr{Y_\ellInd=y_\ellInd|H_\ellInd = h_\ellInd} &\leq p_y + \tau_B\ ,\label{eq:BobBias}
\end{align}
where we define $\tau = \max\{\tau_A,\tau_B\}$.
\item \emph{Locality.}
The outputs $a_i$ and $b_i$ only depend on the local input settings and history: they are independent of each other and of the input setting at the other 
side, conditioned on the previous history and the current \heraldingEvent
\begin{equation}
\label{eq:locality}
\forall \ellInd, (X_{\ellInd},A_\ellInd) \indep (Y_{\ellInd},B_\ellInd)|\HH_\ellInd, T_\ellInd\ .
\end{equation}
\item \emph{Sequentiality of the experiments.}
Every one of the $\NNN$ \TTs\ takes place sequentially such that any possible signalling between different \TTs\ beyond the previous conditions is prevented. The reason for this condition is that this signalling opens the simultaneous measurement loophole \cite{Barrett_02}. Also note, that if the sequentiality condition is not met the history random variable becomes ill-defined. 
\end{enumerate}
Except for these properties the variables might be correlated in any possible way. 

A model that verifies these properties or constraints is what we call an $\LHVM$ and it is under these conditions that our statements on the $\pvalue$ do hold. Then, a small $\pvalue$ can be used to reject the hypothesis that the experiment was governed by such an $\LHVM$. 
Note that some of these constraints are naturally backed by some experimental setups while in some others they might be less justified. For instance, one might argue that that $X_i$ and $T_i$ are independent given the history because the corresponding stations are space-like separated. If, on the other hand, the stations can signal to each other, one might still make the assumption that $X_i$ and $T_i$ are independent. However, in contrast to the scenario in which the stations are space-like separated, a small $\pvalue$ and consequent rejection of the null hypothesis would still leave the door open to other local models that can explain the observed data with high probability, e.g. models in which $X_i$ and $T_i$ are not independent.

\section{Analysis of win/lose games}\label{sec:winLoseProofs}
In this section we consider games where the scoring variable takes only two values. As argued in the main text, these games can always be transformed, via normalization, into games that take the values 1 and 0. We identify these values with winning and losing. This analysis is an extension of the one done for the Delft experiment 
(Supplementary information~\cite{Hensen_15}).
For a win/lose game, the probability of winning in a given \ev\  equals the probability that the score takes the value 1: $\pr{C_\ellInd=1}$. Note that the score $c$ we compute from the data given in~\eqref{eq:ToCompute}
is now just the number of times that Alice and Bob win the game.

Suppose that we perform a win/lose game $n$ times and we observe $c$ wins. The \pvalue\ for an experiment that employs an event-ready procedure with two outputs $t_j = 0$ (no, not ready) and $t_j=1$ (yes, ready) is 
\begin{equation}
\label{eq:pvaluewinlose}
P{\rm -value} \leq \max_{LHVM} \sum_{\substack{t^m \in \{0,1\}^m \\ |t^m|=n}} \pr{T^m = t^m\mid LHVM} \pr{C \geq c\mid LHVM, T^m = t^m}\ .
\end{equation}
Let us now show how to obtain a tight upper bound on \eqref{eq:pvaluewinlose} for all win/lose games in a systematic way. 
We detail the procedure in the following. 

\subsection{Step 1: Bounding the probability of winning the next trial}
 
First, we need to prove that if the experiment is ruled by an \LHVM, then the winning probability of the next trial 
is bounded from above by some $\bellwin^t$ for any possible history $h_\ellInd$ and event-ready signal $t$. 
Such bounds can be obtained in two ways. If a tight bound is achieved for $\bellwin^t$, then our final bound will also be tight and attained by a LHVM strategy that 
does not use any memory.

\subsubsection{A numerical bound using linear programming}

In general, it is always possible to obtain a numerical bound via
a linear program (LP, see e.g.~\cite{BellSurvey} and also Section~IV in the main text). 
While the history can be arbitrary, note that the history can always be reflected in terms of a choice of hidden-variables. 
It is known that these can be taken to be finite, allowing us to compute
\begin{align}
\bellwin^t \leq {\rm maximize} & \sum_{x,y} 
\sum_{a,b} \hat{s}_{ab|xyt} \pr{A_j=a,B_j=b, X_j=x,Y_j=y\mid H_j=h,T_j=t}\\
&\pr{A_j=a,B_j=b \mid X_j=x,Y_j=y,H_j=h,T_j=t} = \sum_{\lambda} q_{\lambda|h,t}\ d_\lambda(a,b|x,y)\ , \\
&\forall \lambda,\ q_{\lambda|h,t} \geq 0\, \\
&\sum_{\lambda} q_{\lambda|h,t} = 1\ ,\\
&\pr{X_j=x\mid H_j=h,T_j=t} = p_x + \tau_A\ ,\\
&\pr{Y_j=y\mid H_j=h,T_j=t} = p_y + \tau_B\ ,
\end{align}
where $\hat{s}_{ab|xyt} \in \{0,1\}$ are the normalized score functions.
Note that we write $q_{\lambda|h,t}$ for a fixed history $H_j = h$ and $T_j = t$, but the LP above does not depend on knowing
$h$ and $t$ as they simply form labels. We remark this is an upper bound, since we allowed for maximum bias.

Linear programs can be solved efficiently, although the number of variables is prohibitively
large. Nevertheless, for games used in experiment the number of inputs and outputs is generally small enough for the LP to be
solved using Mathematica or Matlab.

\subsubsection{An analytical bound: example CHSH}
However, analytical derivation is also viable, and indeed for an existing Bell inequality we can convert the parameters
$\bellmin$ and $\bellmax$ into suitable bounds. For the purpose of illustration we provide a very simple example of this idea
using CHSH, where we derive a bound directly in terms of the probability of winning the game. We remark that is a refinement over the analysis in~\cite{Hensen_15} that
becomes interesting for a larger bias $\tau$, but more cumbersome to read.

Specifically, in Lemma \ref{lem:tool} we derive a tight upper bound on the winning probability of CHSH with imperfect random number generators in an event-ready setup. 
Analogous derivations for other simple inequalities are straightforward. 
For CHSH, the inputs $X_i,Y_i$, outputs $A_i,B_i$ and output of the heralding station $T_i$ take values 0 and 1. If $T_i=0$ the scoring variable $C_i$ takes always the value zero, if $T_i=1$ then $C_i=1$ when $x\cdot y = a\oplus b$ and $C_i=0$ in the remaining cases.

\begin{lemma}
\label{lem:tool}
Let $m \in \Natural$, and let the sequence $(\mathbf A^m,\mathbf B^m,\mathbf X^m,\mathbf Y^m,\mathbf H^m,\mathbf T^m)$ correspond with $m$ \TTs\ of a CHSH heralding experiment. 
Suppose that the null hypothesis holds, i.e., nature is governed by an \LHVM. Given that the predictability of the RNG is $\tau$, we have for all $\ellInd \in \Natural$ with $i\leq m$, any possible history $\HH_\ellInd =\h_\ellInd$ of the experiment, and $T_\ellInd = 1$ that the probability of $C_\ellInd=1$ is upper bounded by
\begin{equation}
\pr{C_\ellInd=1|\HH_\ellInd = \h_\ellInd, T_\ellInd = 1}\leq \bellwin^1\ ,
\end{equation}
where $\bellwin^1=3/4+(\tau-\tau^2)$.
\end{lemma}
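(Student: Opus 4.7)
The plan is to expand the conditional winning probability using the LHVM structure, reduce to deterministic local strategies by a standard convexity argument, and then optimize the remaining linear function of the (biased) input distributions. I will work throughout with the conditional distributions given $H_i = h_i$ and $T_i = 1$, which I will suppress in the notation.

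First, I would use locality (equation~\eqref{eq:locality}) and the independence of $X_i, Y_i$ from each other given the history to factorize
\begin{equation*}
\Pr[A_i = a, B_i = b, X_i = x, Y_i = y \mid h_i, 1] = \Pr[X_i = x \mid h_i, 1]\,\Pr[Y_i = y \mid h_i, 1]\,\Pr[A_i = a \mid x, h_i, 1]\,\Pr[B_i = b \mid y, h_i, 1].
\end{equation*}
Writing $p(x) := \Pr[X_i = x \mid h_i, 1]$ and $p(y) := \Pr[Y_i = y \mid h_i, 1]$, and similarly for the response distributions $p(a|x)$ and $p(b|y)$, the winning probability for the CHSH condition $x \cdot y = a \oplus b$ becomes a multilinear function of $p(a|x)$ and $p(b|y)$ with the input distributions fixed.

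Next I would observe that, for fixed $p(x), p(y)$, the map $(p(a|x), p(b|y)) \mapsto \Pr[C_i = 1]$ is multilinear over the convex set of response strategies, so the maximum is attained at a vertex, i.e.\ on a deterministic local strategy $\lambda = (a_0, a_1, b_0, b_1)$. For every such deterministic strategy the CHSH condition fails on \emph{exactly one} input pair $(x^*, y^*) = (x^*(\lambda), y^*(\lambda))$, hence
\begin{equation*}
\Pr[C_i = 1 \mid h_i, 1, \lambda] \;=\; 1 - p(x^*)\,p(y^*).
\end{equation*}

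Finally, using the bias constraints~\eqref{eq:AliceBias}--\eqref{eq:BobBias} with $p_x = p_y = 1/2$ and $\tau_A, \tau_B \leq \tau$, we have $p(x^*), p(y^*) \geq 1/2 - \tau$, so
\begin{equation*}
\Pr[C_i = 1 \mid h_i, 1, \lambda] \;\leq\; 1 - \left(\tfrac{1}{2} - \tau\right)^2 \;=\; \tfrac{3}{4} + \tau - \tau^2.
\end{equation*}
Taking the maximum over deterministic strategies (and hence over all local strategies) then gives $\beta_{\win}^1 = 3/4 + \tau - \tau^2$, as claimed. The main subtle point is simply to keep careful track of the fact that the LHVM is allowed to correlate its choice of deterministic strategy with the history and with the direction in which it pushes the RNG bias; since our bound holds uniformly over $\lambda$ and over admissible $(p(x^*), p(y^*))$, this correlation cannot help.
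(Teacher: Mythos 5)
Your overall strategy --- factorize via locality, reduce to deterministic local strategies by multilinearity, then optimize over the admissible biased input distributions --- is sound and is a genuinely different (and arguably cleaner) route than the paper's. The paper keeps the continuous parametrization $\chi_x = \Pr[A_i=1\mid X_i=x,\ldots]$, $\gamma_y = \Pr[B_i=1\mid Y_i=y,\ldots]$, expands the winning probability as $\sum_{x,y}\alpha_x\beta_y f_{x,y}$, bounds $\sum_{x,y} f_{x,y}\le 3$ by pushing $\chi_0,\chi_1$ to extreme points, and only then folds in the bias; you go straight to the vertices of the local polytope and optimize the bias at the end. Both routes yield $3/4+\tau-\tau^2$, and both implicitly use the model assumption $X_i \indep T_i \mid H_i$ (and likewise for $Y_i$) to transfer the bias bounds \eqref{eq:AliceBias}--\eqref{eq:BobBias} to the distributions conditioned on $T_i=1$; it would be worth making that step explicit.

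There is, however, one step that is false as written: it is not true that every deterministic strategy $\lambda=(a_0,a_1,b_0,b_1)$ fails the CHSH condition on \emph{exactly} one input pair. Since $\bigoplus_{x,y}(a_x\oplus b_y)=0$ while $\bigoplus_{x,y}\, x\cdot y=1$, the number of losing pairs is odd, i.e.\ either one or three; for instance $a_0=a_1=0$, $b_0=b_1=1$ loses on $(0,0)$, $(0,1)$ and $(1,0)$ and wins only on $(1,1)$. For such a strategy your formula $\Pr[C_i=1\mid h_i,1,\lambda]=1-p(x^*)p(y^*)$ does not apply, so your chain of inequalities does not cover all vertices of the polytope. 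The repair is immediate: either weaken the claim to ``every deterministic strategy loses on \emph{at least} one pair $(x^*,y^*)$'', whence $\Pr[C_i=1\mid h_i,1,\lambda]\le 1-p(x^*)p(y^*)\le 1-\left(\tfrac{1}{2}-\tau\right)^2$, or treat the three-loss strategies separately, for which $\Pr[C_i=1\mid h_i,1,\lambda]\le\left(\tfrac{1}{2}+\tau\right)^2\le \tfrac{3}{4}+\tau-\tau^2$ whenever $\tau\le 1/2$. With that one-line fix the argument is correct and matches the paper's bound, including the final optimization over $\tau_A,\tau_B\le\tau$.
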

\begin{proof}
We first expand the desired term using the definition of $C_\ellInd$ as
\begin{align}
\pr{C_\ellInd=1|\HH_\ellInd = \h_\ellInd,T_\ellInd=1} &= \sum_{\substack{{x,y,z\in\{0,1\}}\\(x,y)\neq(1,1)}}
\pr{A_\ellInd=z,B_\ellInd=z,X_\ellInd=x,Y_\ellInd=y|\HH_\ellInd=\h_\ellInd,T_\ellInd =1}\nonumber\\
&+\sum_{\substack{z\in\{0,1\}}}\pr{A_\ellInd=z,B_\ellInd=z\oplus 1,X_\ellInd=1,Y_\ellInd=1|\HH_\ellInd=\h_\ellInd,T_\ellInd=1}\ .\label{eq:probexpansion1}
\end{align}
We can break these probabilities into simpler terms
\begin{align}
&\p\big(A_\ellInd=a,B_\ellInd=b,X_\ellInd=x,Y_\ellInd=y|\HH_\ellInd=\h_\ellInd, T_\ellInd=1\big)  \nonumber\\
&\qquad =\pr{A_\ellInd=a,X_\ellInd=x|\HH_\ellInd=\h_\ellInd,T_\ellInd=1}\nonumber\\
&\qquad\qquad \cdot\pr{B_\ellInd=b,Y_\ellInd=y|\HH_\ellInd=\h_\ellInd,T_\ellInd=1}\\
&\qquad =\pr{X_\ellInd=x|\HH_\ellInd=\h_\ellInd,T_\ellInd=1}\pr{A_\ellInd=a|X_\ellInd=x,\HH_\ellInd=\h_\ellInd,T_\ellInd=1} \nonumber\\
&\qquad\qquad \cdot \pr{Y_\ellInd=y|\HH_\ellInd=\h_\ellInd,T_\ellInd = 1} \pr{B_\ellInd=b|Y_\ellInd=y,\HH_\ellInd=\h_\ellInd,T_\ellInd=1}\ . 
\end{align}
The first equality followed by the locality condition, the second one simply by the definition of conditional probability.
With this decomposition, we can express \eqref{eq:probexpansion1} as
\begin{align}
\pr{C_\ellInd=1|\HH_\ellInd = \h_\ellInd,T_\ellInd=1} &= \sum_{\substack{{x,y\in\{0,1\}}\\(x,y)\neq(1,1)}}\alpha_{x}\beta_{y}\left(\chi_{x}\gamma_{y}+(1-\chi_{x})(1-\gamma_{y})\right)\nonumber\\
                                                                &\qquad \qquad +\alpha_{1}\beta_{1}\left(\chi_{1}(1-\gamma_{1})+(1-\chi_{1})\gamma_{1}\right)\\
                                                                &= \sum_{x,y\in\{0,1\}} \alpha_{x}\beta_{y}f_{x,y}\ . \label{eq:compactsum}
\end{align}
where we have used the shorthands
\begin{align}
\chi_{x}&:=\pr{A_\ellInd=1|X_\ellInd=x,\HH_\ellInd = \h_\ellInd,T_\ellInd=1}\ , \\
\gamma_{y}&:=\pr{B_\ellInd=1|Y_\ellInd=y,\HH_\ellInd = \h_\ellInd,T_\ellInd=1},\\
\alpha_{x}&:=\pr{X_\ellInd=x|H_\ellInd = h_\ellInd,T_\ellInd=1}\ ,\\
\beta_{y'}&:=\pr{Y_\ellInd=y|H_\ellInd = h_\ellInd,T_\ellInd=1}\ ,\\
f_{x,y} &:= 
\left\{ 
\begin{aligned}
\chi_{x}\gamma_{y}+(1-\chi_{x})(1-\gamma_{y}) & \qquad\textrm{ if } (x,y)\neq (1,1)\ , \\
\chi_{x}(1-\gamma_{y})+(1-\chi_{x})\gamma_{y} & \qquad\textrm{ otherwise.}
\end{aligned}
\right.
\end{align}
Now we will expand \eqref{eq:compactsum}. We know that $1/2 -\tau\leq\alpha_{x},\beta_{y}\leq 1/2 +\tau$. In principle, $\tau$ does not need to take the values in the extreme on the range. Without loss of generality let $\alpha_{0}=1/2+\tau_A$ and $\beta_0=1/2+\tau_B$, with $\tau_A,\tau_B\in(-1/2,1/2)$.
\begin{align}                                                                
\sum_{x,y\in\{0,1\}} \alpha_{x}\beta_{y}f_{x,y}  &= \left(\frac{1}{2}+\tau_{A}\right)\left(\frac{1}{2}+\tau_{B}\right)f_{0,0} + \left(\frac{1}{2}+\tau_{A}\right)\left(\frac{1}{2}-\tau_{B}\right)f_{0,1}\nonumber\\
                                                                &\qquad+\left(\frac{1}{2}-\tau_{A}\right)\left(\frac{1}{2}+\tau_{B}\right)f_{1,0}+\left(\frac{1}{2}-\tau_{A}\right)\left(\frac{1}{2}-\tau_{B}\right)f_{1,1}\\
                                                                &= \left(\frac{1}{4}+\frac{1}{2}\tau_{A}+\frac{1}{2}\tau_{B}+\tau_{A}\tau_{B}\right)f_{0,0} + \left(\frac{1}{4}+\frac{1}{2}\tau_{A}-\frac{1}{2}\tau_{B}-\tau_{A}\tau_{B}\right)f_{0,1}\nonumber\\
                                                                &\qquad+\left(\frac{1}{4}-\frac{1}{2}\tau_{A}+\frac{1}{2}\tau_{B}-\tau_{A}\tau_{B}\right)f_{1,0}+ \left(\frac{1}{4}-\frac{1}{2}\tau_{A}-\frac{1}{2}\tau_{B}+\tau_{A}\tau_{B}\right)f_{1,1}\\
&= \left(\tau_{A}+\tau_{B}\right)f_{0,0} + \left(\tau_{A}-2\tau_{A}\tau_{B}\right)f_{0,1}\nonumber\\
                                                                &\qquad+\left(\tau_{B}-2\tau_{A}\tau_{B}\right)f_{1,0}+ \left(\frac{1}{4}-\frac{1}{2}\tau_{A}-\frac{1}{2}\tau_{B} +\tau_{A}\tau_{B}\right)\sum_{a,b}f_{a,b}\ .\label{eq:FRemain}
\end{align}                                                                
It thus remains to bound the sum of $f_{x,y}$. Note that we can write
\begin{align}
\sum_{x,y\in\{0,1\}} f_{x,y} &= \left(\chi_{0}\gamma_{0}+(1-\chi_{0})(1-\gamma_{0})\right)
                                                               + \left(\chi_{0}\gamma_{1}+(1-\chi_{0})(1-\gamma_{1})\right) \nonumber\\
                                                               &\qquad + \left(\chi_{1}\gamma_{0}+(1-\chi_{1})(1-\gamma_{0})\right)
                                                               + \left(\chi_{1}(1-\gamma_{1})+(1-\chi_{1})\gamma_{1}\right) \\
&= \chi_0 \left(\gamma_0 + \gamma_1\right) + \left(1-\chi_0\right) \left(2 - \gamma_0 - \gamma_1\right)\nonumber\\
&\qquad + \chi_1 \left(\gamma_0 + 1 - \gamma_1\right) + \left(1- \chi_1\right)\left(1 - \gamma_0 + \gamma_1\right)\label{eq:updelta}\ . 
\end{align}
Since \eqref{eq:updelta} is a sum of two convex combinations, it must take its maximum value at one of the extreme points, that is with $\chi_{0}\in\{0,1\}$ and $\chi_{1}\in\{0,1\}$. We can thus consider all four combinations of values for $\chi_0$ and $\chi_1$ given by
\begin{align}
\sum_{x,y\in\{0,1\}} f_{x,y} =
\left\{
\begin{aligned}
3-2\gamma_{0} &\textrm{ if }(\chi_{0},\chi_{1})=(0,0)\ ,\\ 
3-2\gamma_{1} &\textrm{ if }(\chi_{0},\chi_{1})=(0,1)\ ,\\
1+2\gamma_{1} &\textrm{ if }(\chi_{0},\chi_{1})=(1,0)\ ,\\
1+2\gamma_{0} &\textrm{ if }(\chi_{0},\chi_{1})=(1,1)\ .
\end{aligned}
\right.
\end{align}
Since $0 \leq \gamma_0,\gamma_1 \leq 1$, we have in all cases that the sum is upper bounded by $3$. 

Finally, using~\eqref{eq:FRemain} we have
\begin{align}
\pr{C_\ellInd=1|H_\ellInd=h_\ellInd,T_\ellInd=1} &\leq  2\left(\tau_{A}+\tau_{B}-2\tau_{A}\tau_{B}\right)+3\left(\frac{1}{4}-\frac{1}{2}\tau_{A}-\frac{1}{2}\tau_{B} +\tau_{A}\tau_{B}\right)\\
                                                                &=\frac{3}{4}+\frac{1}{2}(\tau_A+\tau_B)-\tau_A\tau_B\label{eq:lotsoftau}\\
                                                                &\leq\frac{3}{4}+\tau-\tau^2 
\end{align}
where in the first inequality we bound $f_{0,0},f_{0,1},f_{1,0}$ by 1. The second inequality follows since $\tau\leq 1/2$ and for $\tau_A,\tau_B$ below $1/2$ \eqref{eq:lotsoftau} is strictly increasing both in $\tau_A$ and $\tau_B$; this implies that the maximum is found in the extreme: $\tau=\tau_A=\tau_B$.
\end{proof}

Note that in the case $T_i=0$, we trivially have $\pr{C_\ellInd=1|\HH_\ellInd = \h_\ellInd,T_\ellInd=0}=0=\bellwin^0$.

\subsubsection{Step 2: Replacing the history with the recorded values of $\mbigCi$ and $\mbigTi$.}
Now, building on the above, we prove that the probability that $C_i$ takes the value one given not the entire history, but only the heralding events and the prior sequence of scores, is bounded from above by the same $\bellwin^t$. 
While the two statements look very similar, the main difference is that while in Step 1 we condition on the entire history $\HH_\ellInd =\h_\ellInd$, in Lemma \ref{lem:tool2} we condition on the heralding events $\mbigTi=\mti$, and the prior sequence $\mbigCi = (C_j)_{j=1}^{i-1}$ of data that can 
actually be observed~\footnote{Note that since the history captures an arbitrary state of the experiment in the past, it
could also include things which are not measured or recorded by the experimenter.}. 
Although both statements are similar, it is Lemma \ref{lem:tool2} that we can easily use in the proof of Lemma \ref{lem:mainth} to bound the $p$-value of the experiment.

We will need Proposition \ref{prop}, which is a basic probabilistic statement necessary for Lemma \ref{lem:tool2}. In essence, it is just the measure theoretic version of
\begin{align}
\pr{A=a}=\sum_{b}\pr{A=a|B=b}\pr{B=b}\ .
\end{align}
We state it for completeness, with the purpose of having the derivation of the bound on the $p$-value as self contained as possible.
\begin{proposition}[Law of total probability]
\label{prop}
Let $A,B$ be two random variables on the same probability space $\Omega$ with $\expt(|A|)<\infty$. Then the probability of an event $A=a$ admits the following integral form
\label{prop:trick}
\begin{align}
\pr{A=a} &= \int_\Omega \pr{A=a|B=b}d\mu(b)\ ,
\end{align}
for some measure $d\mu$ on $\Omega$.
\end{proposition}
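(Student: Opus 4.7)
The plan is to identify $d\mu$ with the law (pushforward) of $B$ and then apply the tower property of conditional expectation. First I would set $\mu(E) := \pr{B \in E}$ for measurable sets $E$ in the range of $B$. With this choice, $\pr{A=a \mid B=b}$ is understood as a version of the regular conditional probability of the event $\{A=a\}$ given $B=b$; such a version exists whenever $B$ takes values in a standard Borel space, and in particular in the countable setting that the paper has already committed to for the history variable, where it reduces to the elementary formula $\pr{A=a,B=b}/\pr{B=b}$ whenever $\pr{B=b}>0$ (and may be set arbitrarily otherwise).

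Given this setup, the proof is two short steps. First, write $\pr{A=a} = \expt[\mathds{1}\{A=a\}]$ and invoke the tower property to obtain $\expt[\mathds{1}\{A=a\}] = \expt[\,\expt[\mathds{1}\{A=a\} \mid B]\,]$. Second, rewrite the outer expectation as an integral against the law of $B$: by the change-of-variables formula for pushforward measures, $\expt[g(B)] = \int g(b)\,d\mu(b)$ for any integrable $g$. Applying this with $g(b) := \pr{A=a \mid B=b}$ delivers the claimed identity $\pr{A=a} = \int \pr{A=a \mid B=b}\,d\mu(b)$.

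The one point that needs care is the pointwise-in-$b$ definition of $\pr{A=a \mid B=b}$, since conditional expectations are only determined almost surely with respect to $\mu$. In the countable case assumed throughout Section~\ref{sec:notation} this is trivial, so the statement reduces to the familiar finite/countable law of total probability; in full generality one appeals to the existence of regular conditional distributions, which holds under the standard Borel assumption. I would also flag two minor notational remarks for the reader: the hypothesis $\expt(|A|)<\infty$ is vestigial, as $\mathds{1}\{A=a\}$ is bounded, and the measure $d\mu$ is most naturally interpreted as living on the range space of $B$ rather than literally on $\Omega$. Since the result is only invoked downstream with countable conditioning variables, none of these subtleties affect its use in bounding the \pvalue.
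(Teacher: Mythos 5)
Your proof is correct. The paper in fact states this proposition without proof, remarking only that it is ``the measure theoretic version'' of the discrete identity $\pr{A=a}=\sum_{b}\pr{A=a|B=b}\pr{B=b}$; the argument you give --- taking $\mu$ to be the law of $B$, writing $\pr{A=a}=\expt[\mathds{1}\{A=a\}]$, and applying the tower property together with the change-of-variables formula for the pushforward --- is exactly the standard justification the authors are implicitly appealing to. Your side remarks are also accurate: the hypothesis $\expt(|A|)<\infty$ is indeed superfluous for an indicator, the measure is more properly supported on the range of $B$ than on $\Omega$ itself, and the only genuine subtlety (a pointwise-in-$b$ version of $\pr{A=a|B=b}$) is harmless here because the proposition is only ever invoked with countable conditioning variables such as the history $H_\ellInd$.
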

\begin{lemma}
\label{lem:tool2}
Suppose that the null hypothesis holds, i.e., nature is governed by an \LHVM. 
Let $m \in \Natural$, and let the sequence $(\mathbf A^m,\mathbf B^m,\mathbf X^m,\mathbf Y^m,\mathbf H^m,\mathbf T^m)$ correspond with $m$ \TTs\ of a heralding experiment.  The heralding station has $n$ outputs. 
If for all $i \in \Natural$ with  $i \leq \N$, any possible history $\HH_\ellInd =\h_\ellInd$ of the experiment, and $T_\ellInd = t_\ellInd$ the probability that $C_i$ takes the value one satisfies: $$\pr{C_\ellInd=1|\HH_\ellInd = \h_\ellInd, T_\ellInd = t_\ellInd}\leq \bellwin^{t_\ellInd}.$$
Then for all sequences $\mti \in \{0,1\}^{\ellInd}$ and $\mvi \in \{0,1\}^{\ellInd-1}$:
\begin{equation}
\pr{C_\ellInd=1|\mbigTi= \mti,\mbigCi = \mvi}\leq \bellwin^{t_\ellInd}\ .
\end{equation}
\end{lemma}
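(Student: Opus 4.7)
The plan is to push the conditioning on the coarse observable $(\mbigTi, \mbigCi)$ through the conditioning on the fine observable $\HH_\ellInd$ by means of the law of total probability (Proposition~\ref{prop}), and then apply the single-trial bound hypothesized in the statement pointwise.

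First, I would invoke Proposition~\ref{prop} to write
\begin{align*}
&\pr{C_\ellInd = 1 \mid \mbigTi = \mti, \mbigCi = \mvi} \\
&\qquad = \int \pr{C_\ellInd = 1 \mid \HH_\ellInd = \h_\ellInd, \mbigTi = \mti, \mbigCi = \mvi}\, d\mu(\h_\ellInd),
\end{align*}
where $d\mu$ is the conditional distribution of $\HH_\ellInd$ given $\{\mbigTi = \mti, \mbigCi = \mvi\}$; in particular $\mu$ is supported on those histories $\h_\ellInd$ that are compatible with $\mti$ and $\mvi$.

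The key observation is that, by the definition of the history given in Section~\ref{sec:notation}, $\HH_\ellInd$ contains the full record $(X_j, Y_j, A_j, B_j, T_j)_{j=1}^{\ellInd-1}$, and therefore $\mbigTiprevious = (T_1,\ldots,T_{\ellInd-1})$ and $\mbigCi = (C_1,\ldots,C_{\ellInd-1})$ are deterministic functions of $\HH_\ellInd$ via~\eqref{eq:CDef}. Consequently, once we condition on $\HH_\ellInd = \h_\ellInd$ and on $T_\ellInd = t_\ellInd$, further conditioning on $\mbigTiprevious$ and $\mbigCi$ is vacuous, so
\begin{equation*}
\pr{C_\ellInd = 1 \mid \HH_\ellInd = \h_\ellInd, \mbigTi = \mti, \mbigCi = \mvi} = \pr{C_\ellInd = 1 \mid \HH_\ellInd = \h_\ellInd, T_\ellInd = t_\ellInd}
\end{equation*}
for every $\h_\ellInd$ in the support of $\mu$. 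By the hypothesis of the lemma, the right-hand side is upper bounded by $\bellwin^{t_\ellInd}$.

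Finally, substituting this pointwise bound back into the integral and using that $\mu$ is a probability measure yields
\begin{equation*}
\pr{C_\ellInd = 1 \mid \mbigTi = \mti, \mbigCi = \mvi} \leq \bellwin^{t_\ellInd}\int d\mu(\h_\ellInd) = \bellwin^{t_\ellInd},
\end{equation*}
as claimed. The only real subtlety is the measure-theoretic bookkeeping: one must be careful that the version of the conditional probability chosen is compatible with the factorization above, and that the set of histories on which $\pr{C_\ellInd = 1 \mid \HH_\ellInd = \h_\ellInd, T_\ellInd = t_\ellInd}$ fails the bound is $\mu$-null. Both of these are immediate once we fix a regular conditional distribution and use that $\HH_\ellInd$ was assumed countable in Section~\ref{sec:notation}, so no measurability issues arise and the integral reduces to a sum. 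This is the main (and essentially only) obstacle, and it is handled by a single application of the tower property of conditional expectation.
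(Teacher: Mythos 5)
Your proposal is correct and follows essentially the same route as the paper: both decompose over the history via the law of total probability, use the fact that $\mbigTiprevious$ and $\mbigCi$ are determined by $\HH_\ellInd$ (the paper encodes this as the indicator $\deltaDef$, you as "vacuous further conditioning"), and then apply the single-trial bound pointwise. The only difference is presentational—the paper multiplies through by $\pr{\mbigTi=\mti,\mbigCi=\mvi}$ and works with joint probabilities before cancelling, while you condition directly—which changes nothing of substance.
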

\begin{proof}
The following equalities hold from the definition of conditional probability and Proposition \ref{prop:trick}
\begin{align}
&\pr{C_\ellInd=1|\mbigTi=\mti,\mbigCi = \mvi} \pr{\mbigTi= \mti,\mbigCi =\mvi}\nonumber\\
&\qquad = \pr{C_i=1,\mbigTi= \mti,\mbigCi = \mvi}\\
\label{eq:fromhere}                           &\qquad = \int_\Omega \pr{{C_\ellInd=1,\mbigTi= \mti,\mbigCi = \mvi}|H_\ellInd=\h_\ellInd}d\mu(\h_i)\ .
\end{align}
Let us bound the integrand in the previous equation. We have
\begin{align}
\pr{{C_\ellInd=1,\mbigTi= \mti,\mbigCi = \mvi}|H_\ellInd=\h_\ellInd}
&= \pr{{C_\ellInd=1,T_\ellInd = t_\ellInd}|H_\ellInd=\h_\ellInd} \cdot\deltaDef
\label{eq:EEEfirst}\\
&= \pr{{C_\ellInd=1}|H_\ellInd = \h_\ellInd,T_\ellInd = t_\ellInd} \label{eq:EEEsecond}
\pr{{T_\ellInd = t_\ellInd}|H_\ellInd = \h_\ellInd}\cdot\deltaDef \\
&\leq \bellwin^{t_\ellInd}\, \pr{{T_\ellInd = t_\ellInd}|H_\ellInd = \h_\ellInd}\cdot\deltaDef\label{eq:EEEfourth} \\
\label{eq:plugback}                 &\leq \bellwin^{t_\ellInd}\, \pr{\mbigTi= \mti, \mbigCi = \mvi|H_\ellInd = \h_\ellInd}\ , 
\end{align}
where $\deltaDef$ is a shorthand for
\begin{equation}
\deltaDef = \pr{\mathds 1\{\mbigTiprevious =\mtiprevious,\mbigCi = \mvi\}=1|H_\ellInd=\h_\ellInd}\ .
\end{equation}
The first equality~\eqref{eq:EEEfirst} follows from the fact that $\mtiprevious$ and $\mvi$ are events either compatible or incompatible with $\h_\ellInd$, the second one~\eqref{eq:EEEsecond} 
from the definition of conditional probability, 
and the inequality~\eqref{eq:EEEfourth} from Lemma \ref{lem:tool}. 
We now introduce~\eqref{eq:plugback} back into~\eqref{eq:fromhere} to obtain
\begin{align}
&\pr{C_i=1|\mbigTi=\mti,\mbigCi = \mvi} \pr{\mbigTi= \mti,\mbigCi =\mvi}\nonumber\\
 &\qquad \leq \bellwin^{t_\ellInd}\, \int_\Omega  \pr{{\mbigTi= \mti, \mbigCi = \mvi}|H_\ellInd=\h_\ellInd} d\mu(h_\ellInd)\\ 
                                  &\qquad =  \bellwin^{t_\ellInd}\, \pr{\mbigTi=\mti,\mbigCi = \mvi}\ , \label{eq:FFF}
\end{align}
where the equality~\eqref{eq:FFF} follows from Proposition \ref{prop:trick}. We complete the proof by cancelling the terms\\
 $\pr{\mbigTi=\mti,\mbigCi= \mvi}$ on the right and left side of the equation above.                            
\end{proof}

\subsection{Step 3: Going from one to many \TTs}
In the last part of this technical derivation, we put together the statements above and instead of making a statement just about the next \TT, we now make a statement about all \TTs\ together. This proof generalizes Proposition 4 in \cite{Bierhorst_14} to event-ready schemes. Even though the analysis is more involved, the proof technique follows the same steps as the original one in \cite{Bierhorst_14}. 

What makes the analysis more tricky, is that in an event-ready scheme we have a long sequence of $m$ attempts, and a (potentially much shorter) sequence of $n$ trials, that is, attempts for which $t_\ellInd = 1$. 
It is intuitive that of relevance in the long sequence of $m$ \TTs, is the sequence $\mathbf{C}^{m} = (C_1,\ldots,C_m)$ together with the sequence of event-ready attempts $\mathbf{T}^m = (T_1,\ldots,T_m)$. Recall
that the latter tells us which elements of $\mathbf{C}^m$ are of interest, i.e., can at all be non-zero. To reason about the shorter sequence of $n$ \evs, let us first introduce some notation.
Our goal will be to define a series of random variables $\mathbf{D}^n = (D_1,\ldots,D_n)$ for the short sequence of \evs, where intuitively $D_j$ corresponds to the random variable
taking value one when the $j$-th event-ready success also results in $T_i = 1$ for any corresponding $i$.
In other words, we will define $\mathbf{D}^n$ in such a way that instead of worrying
about the number of 1's in $(C_1 T_1,\ldots,C_m T_m)$ we will be concerned with the number of 1's in $(D_1,\ldots,D_n)$.

To define this formally, we need a way to map the $j$-th trial from the short sequence of \evs, to the
index $i$ in the longer sequence of \TTs. Note that for a particular event-ready sequence $\mtotal = (t_1,\ldots,t_m) \in \mbigTtotal$, the $j$-th trial is mapped to the smallest index $i$, such that
the subsequence $\mti = (t_1,\ldots,t_i)$ of $\mtotal$ has exactly $j$ 1's. Of course, there are many sequences $\mti \in \mbigTi$ that have precisely $j$ 1's, where the last element is also a 1, and for all such strings the mapping from $j$ in the sequence of \evs, to the index $i$ in the sequence of \TTs\ is the same.
Let us thus define
\begin{align}
\longS = \left\{\mtotal = (t_1,\ldots,t_m) \in \{0,1\}^m\mid |\mtotal| = n \mbox{ and } \mti = (t_1,\ldots,t_i) \mbox{ satisfies } |\mti| = j \mbox{ and } |\mtiprevious| = j - 1\right\}\ ,
\end{align}
to be the set of all event-ready sequences $\mtotal$ for which $j$ is mapped to one particular $i$.
By summing over all possible indices $i$ in the long sequence of \TTs, we can thus formally define
\begin{align}
D_j = \sum_{i=1}^m\ \sum_{\mtotal \in \longS} \sum_{\mathbf{c}^m \in \{0,1\}^m} \mathds{1}\left\{\mbigTtotal = \mtotal, \mathbf{C}^m = \mathbf{c}^m\right\} \cdot C_i\ ,
\end{align}
where $\mathds{1}$ is as before the indicator function.
In terms of probabilities, this means that the probability that the $j$-th trial gives $D_j = 1$ is given by
\begin{align}\label{eq:Dj1}
\pr{D_j = 1} = \sum_{i=1}^m \left(\sum_{\mtotal \in \longS}\ \sum_{\substack{\mathbf{c}^m \in \{0,1\}^m\\c_i = 1}} \pr{\mbigTtotal = \mtotal, \mathbf{C}^m = \mathbf{c}^m}\right)\ .
\end{align}
We can thus express the \pvalues as
\begin{align}
P\textrm{-value} &= \sum_{\substack{\mtotal \in \{0,1\}^m\\|\mtotal| = n}} \pr{\mbigTtotal = \mtotal} \pr{\mbox{number of 1's in}\ (C_1\cdot t_1,\ldots, C_m\ldots t_m) \geq k \mid \mbigTtotal = \mtotal}\\
&= \pr{\mbox{number of 1's in } (D_1,\ldots, D_n) \geq k}\\
&= \pr{\sum_{j=1}^nD_j\geq k}\ .
\end{align}

Before delving into the proof below, it will be convenient to simplify~\eqref{eq:Dj1}. Note that for a fixed $i$, the term in brackets in~\eqref{eq:Dj1} contains a sum over all possible $t_{i+1},\ldots,t_m$ and $c_{i+1},\ldots,c_{m}$.
This means we can use the law of total probability to shorten the sum by expressing~\eqref{eq:Dj1} in terms of the marginal distributions as
\begin{align}\label{eq:Dj1Marginal}
\pr{D_j = 1} = \sum_{i=1}^m\ \sum_{\mti \in \shortS}\ \sum_{\substack{\mvi \in \{0,1\}^{i-1}}} \pr{\mbigTi= \mti, \mbigCi = \mvi, C_i = 1}\ ,
\end{align}
where
\begin{align}
\shortS = \left\{\mti = (t_1,\ldots,t_i) \in \{0,1\}^i \mid \exists \mtotalHat = (\hat{t}_1,\ldots,\hat{t}_m) \in \longS \mbox{ such that } (\hat{t}_1,\ldots,\hat{t}_i) = (t_1,\ldots,t_i)\right\}\ .
\end{align}
After having formally established the relation between the sequence of \evs\ and the sequence of \TTs, we are now ready for the final proof, where
we can now argue in terms of the sequence of \evs\ $(D_1,\ldots,D_n)$.

\begin{lemma}
\label{lem:mainth}
Suppose that the null hypothesis holds, i.e., nature is governed by an \LHVM. 
Let $m,n,k \in \Natural$ and let the sequence $(\mathbf A^m,\mathbf B^m,\mathbf X^m,\mathbf Y^m,\mathbf H^m,\mathbf T^m)$ correspond with $m$ \TTs\ of an event-ready experiment. If
\begin{equation}
\pr{C_\ellInd=1|\mbigTi=\mti,\mbigCi = \mvi}\leq \bellwin,
\end{equation}
then we have that for all $\N \geq n$,
the probability that at least $k$ of the $(D_j)_{j=1}^n$ take the value one is upper bounded by
\begin{equation}
\label{eq:mainthb}
P\textrm{\textnormal{-value}} = \pr{\sum_{j=1}^n D_j\geq k}
\leq P_{n,k}(\B)\ ,
\end{equation}
where $P_{n,k}(\B)$ denotes the probability that $n$ Bernoullis with probability $\gamma := \bellwin$ yield at least $k$ 1's,
and $P_{n,k}(\B)=0$ if $k>n$.
\end{lemma}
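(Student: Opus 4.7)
The plan is to prove \eqref{eq:mainthb} in two steps: a per-trial conditional bound $\pr{D_j=1 \mid D_1,\ldots,D_{j-1}} \leq \gamma$, followed by an induction on $n$ using the standard binomial-tail recursion. The argument parallels Bierhorst's Proposition~4 in \cite{Bierhorst_14}; the only new ingredient is tracking the event-ready layer, which has been pre-packaged into the definitions of $(D_j)_{j=1}^n$, $\longS$, and $\shortS$.

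\textbf{Step 1 (per-trial bound).} I would first establish that for every $j\in\{1,\ldots,n\}$ and every $(d_1,\ldots,d_{j-1})\in\{0,1\}^{j-1}$ of positive probability,
\[
\pr{D_j=1 \mid D_1=d_1,\ldots,D_{j-1}=d_{j-1}} \leq \gamma.
\]
Since $T_\ell=0$ forces $C_\ell=0$, we have $D_j=C_{i_j}$, where $i_j$ is the (random) position of the $j$-th event-ready success. I would decompose the joint event $\{D_1=d_1,\ldots,D_{j-1}=d_{j-1},D_j=1\}$ as a disjoint union, over $i\in\{1,\ldots,m\}$ and $\mti\in\shortS$, of events $\{\mbigTi=\mti,\mbigCi=\mvi,C_i=1\}$, where $\mvi$ is the unique score sequence that vanishes at the zero-positions of $\mti$ and equals $(d_1,\ldots,d_{j-1})$ in order at the $j-1$ one-positions preceding $i$. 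Because $\mti\in\shortS$ forces $t_i=1$, Lemma~\ref{lem:tool2} bounds each summand via $\pr{C_i=1\mid\mbigTi=\mti,\mbigCi=\mvi}\leq\gamma$, yielding an upper bound of $\gamma\sum_{i,\mti}\pr{\mbigTi=\mti,\mbigCi=\mvi}$. The latter sum recovers exactly $\pr{D_1=d_1,\ldots,D_{j-1}=d_{j-1}}$ (the analogous partition of the conditioning event under the convention that $n$ successes occur), and dividing through yields the per-trial bound.

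\textbf{Step 2 (induction on $n$).} I would fix $k$ and induct on $n$. The boundary cases $n=0$, $k=0$, and $k>n$ are immediate. For the inductive step, write
\[
\pr{\textstyle\sum_{j=1}^n D_j \geq k} = \pr{\textstyle\sum_{j=1}^{n-1} D_j \geq k} + \mathbb{E}\!\left[D_n \cdot \mathds{1}\{\textstyle\sum_{j=1}^{n-1}D_j = k-1\}\right],
\]
apply Step~1 to bound the expectation by $\gamma\,\pr{\sum_{j=1}^{n-1}D_j=k-1}$, substitute $\pr{\sum_{j=1}^{n-1}D_j=k-1} = \pr{\sum_{j=1}^{n-1}D_j\geq k-1} - \pr{\sum_{j=1}^{n-1}D_j\geq k}$, and regroup to obtain
\[
\pr{\textstyle\sum_{j=1}^n D_j \geq k} \leq (1-\gamma)\,\pr{\textstyle\sum_{j=1}^{n-1} D_j \geq k} + \gamma\,\pr{\textstyle\sum_{j=1}^{n-1} D_j \geq k-1}.
\]
The induction hypothesis applied to both tails, combined with the binomial-tail recursion $P_{n,k}(\B)=(1-\gamma)P_{n-1,k}(\B)+\gamma P_{n-1,k-1}(\B)$ (a consequence of Pascal's identity), closes the induction.

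\textbf{Main obstacle.} The real work is Step~1: the index at which the $j$-th event-ready success occurs is random and correlated with the prior history through both the tags $T$ and the scores $C$. The sets $\longS$ and $\shortS$ together with the marginal identity \eqref{eq:Dj1Marginal} are engineered precisely to linearize this dependence into a finite sum, each term of which sits cleanly in the scope of Lemma~\ref{lem:tool2}. Verifying that the $i,\mti$-sum on the numerator side reconstitutes exactly the conditioning probability on the denominator side requires careful accounting of the disjoint decomposition, but once that is in place Step~2 is a routine stochastic-dominance induction.
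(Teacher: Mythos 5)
Your proposal is correct and follows essentially the same route as the paper's proof: the same decomposition of $D_j$-events into sums over $i$ and $\mathbf{t}^i\in\mathcal{T}_{j\rightarrow i}^i$ of $(\mathbf{T}^i,\mathbf{C}^{i-1},C_i)$-events, the same termwise application of Lemma~\ref{lem:tool2}, and the same induction on $n$ via the binomial-tail recursion. The only cosmetic difference is that you package the key estimate as a per-trial bound conditioned on the full prefix $(D_1,\ldots,D_{j-1})$, whereas the paper bounds the aggregate event $\pr{\sum_{j=1}^{n-1}D_j=k-1,\,D_n=1}$ directly; summing your Step~1 over the prefixes with $k-1$ ones recovers exactly the paper's inequality \eqref{eq:mainb}--\eqref{eq:nextStep}.
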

\begin{proof}
Let us define the shorthand
\begin{align}
P_{n,k}(D) = \pr{\sum_{j=1}^n D_j \geq k}\ .
\end{align}
The probability that we see at least zero 1's ($k=0$) obeys
\begin{align}
P_{n,0}\left(D\right) &=1 \\
                                      &= P_{n,0}(\B)
\end{align}
for all $n$ and $\N \geq n$.

We now prove the statement for $k>0$ by induction on $n$.
For $n=1$, we need only to verify that~\eqref{eq:mainthb} holds for $k=1$ (we already dealt with $k=0$ and the case $k> 1$ trivially holds). We have
\begin{align}
P_{1,1}\left(D\right) &=\pr{D_1\geq 1} =\pr{D_1= 1}\\
                                     &=\sum_{i=1}^m\ \sum_{\mathbf t^i\in\shortSindex{1}}
                                          \sum_{\mathbf c^{i-1}\in \{0,1\}^{i-1}}
                                          \pr{\mbigTi = \mti, \mathbf C^{i-1}=\mathbf c^{i-1}, C_i=1} \label{eq:OO1} \\
                                     & \label{eq:OO2} =\sum_{i=1}^m\ \sum_{\mathbf t^i\in\shortSindex{1}}
                                          \sum_{\mathbf c^{i-1}\in \{0,1\}^{i-1}}
\pr{C_i=1|\mathbf C^{i-1}=\mathbf c^{i-1}, \mathbf T^i = \mathbf t^i}\pr{\mathbf C^{i-1}=\mathbf c^{i-1}, \mathbf T^i = \mathbf t^i}\\
                                     &\label{eq:OO3} \leq \bellwin\ \sum_{i=1}^m\ \sum_{\mathbf t^i\in\shortSindex{1}}\sum_{\mathbf c^{i-1} \in \{0,1\}^{i-1}}
                                     \pr{\mathbf C^{i-1}=\mathbf c^{i-1}, \mathbf T^j = \mathbf t^j}\\
                                     &=\bellwin = P_{1,1}(\B)\ , \label{eq:OO4}
\end{align}
where the first equality~\eqref{eq:OO1} is just~\eqref{eq:Dj1Marginal}, the second equality~\eqref{eq:OO2} the definition of conditional probability,
the inequality~\eqref{eq:OO3} follows from Lemma~\ref{lem:tool2}, and the final equality~\eqref{eq:OO4} from the definition of the sets $\longS$ and the fact the sum of all probabilities is 1.

In order to prove the induction step below, let us first
express the probability of having at least $k$ 1's on \ev\ $n$ as the sum of the probability of having at least $k$ on \ev\ $n-1$,
plus the probability of having exactly $k-1$ 1's on \ev\ $n-1$ and a one on the $n$-th \ev
\begin{align}
P_{n,k}\left(D\right) &= \pr{\sum_{j=1}^nD_j\geq k} \\
                                     &= \pr{\sum_{j=1}^{n-1}D_j\geq k} + \pr{\sum_{j=1}^{n-1}D_j= k-1,D_n=1} \\
\label{eq:psumdib}          &= P_{n-1,k}\left(D\right) + \pr{\sum_{j=1}^{n-1}D_j= k-1,D_n=1}\ .
\end{align}
We now upper bound the second term in \eqref{eq:psumdib}, where we will use the shorthand $|\mathbf{c}^{i-1} \cdot \mathbf{t}^{i-1}| = |(c_1t_1,\ldots,c_{i-1} t_{i-1})| = \sum_{j=1}^{i-1} c_j t_j$.
\begin{align}
& \pr{\sum_{j=1}^{n-1}D_j= k-1,D_n=1}\\
&\qquad=\sum_{i={1}}^m\ \sum_{\mathbf t^i\in\shortSindex{n}}\ \sum_{\substack{\mathbf{c}^{i-1}\in \{0,1\}^{i-1}\\|\mathbf{c}^{i-1}\cdot\mathbf{t}^{i-1}|=k-1}} \pr{\mathbf{T}^i=\mathbf t^i,\mathbf{C}^{i-1} = \mathbf{c}^{i-1},C_{i}=1} \\
&\qquad=\sum_{i={1}}^m\ \sum_{\mathbf t^i\in\shortSindex{n}}\ \sum_{\substack{\mathbf{c}^{i-1}\in \{0,1\}^{i-1}\\|\mathbf{c}^{i-1}\cdot\mathbf{t}^{i-1}|=k-1}} \pr{C_{i}=1|\mathbf{T}^i=\mathbf t^i,\mathbf{C}^{i-1} = \mathbf{c}^{i-1}}\pr{\mathbf{T}^i=\mathbf t^i,\mathbf{C}^{i-1} = \mathbf{c}^{i-1}}  \label{eq:GGGfirst}\\
\label{eq:mainb}&\qquad\leq\bellwin\ \sum_{i={1}}^m\ \sum_{\mathbf t^i\in\shortSindex{n}}\ \sum_{\substack{\mathbf{c}^{i-1}\in \{0,1\}^{i-1}\\|\mathbf{c}^{i-1}\cdot\mathbf{t}^{i-1}|=k-1}} \pr{\mathbf{T}^i=\mathbf t^i,\mathbf{C}^{i-1} = \mathbf{c}^{i-1}}\\
\label{eq:nextStep}&\qquad= \bellwin\ \pr{\sum_{j=1}^{n-1} D_j = k-1}\\
\label{eq:pnkineq}&\qquad =\bellwin\  \left(P_{n-1,k-1}(D)-P_{n-1,k}(D)\right)\ .
\end{align}
Equality~\eqref{eq:GGGfirst} follows by the definition of conditional probability,
inequality~\eqref{eq:mainb} from Lemma \ref{lem:tool2}, equality~\eqref{eq:nextStep} from~\eqref{eq:Dj1},
and the last equality~\eqref{eq:pnkineq} because the sum over all vectors having exactly $k-1$ $1$'s
equals the probability of having at least $k-1$ 1's, minus the probability of having at least $k$.

Recall that $P_{n,k}(\B)$ stands for the probability of having at least $k$ successes over $n$ Bernoullis with probability $\bellwin$.
Before proving the induction step, we need to rewrite $P_{n,k}(\B)$ as follows
\begin{align}
P_{n,k}(\B) &= \pr{\textrm{at least $k$ successes over $n-1$ Bernoullis}}\nonumber\\
&\quad+\pr{\textrm{exactly $k-1$ successes over $n-1$ Bernoullis and success on the $n$-th }} \\
&=P_{n-1,k}(\B) \nonumber\\
&\quad+ \pr{\textrm{exactly $k-1$ successes over $n-1$ Bernoullis and success on the $n$-th }} \\
&= P_{n-1,k}(\B) \nonumber\\
&\quad+\pr{\textrm{exactly $k-1$ successes over $n-1$ Bernoullis}}\pr{\textrm{success on the $n$-th \ev}} \\
&= P_{n-1,k}(\B) +\left(P_{n-1,k-1}(\B)-P_{n-1,k}(\B)\right)\pr{\textrm{success on the $n$-th \ev}}\\
\label{eq:laststep}&=P_{n-1,k}(\B) +\bellwin\left(P_{n-1,k-1}(\B)-P_{n-1,k}(\B)\right)\ .
\end{align}
Now we prove the induction step. Consider some arbitrary $n>1$ and consider the induction hypothesis that $\forall m \geq n$ and $\forall k' \leq n$ ,  $P_{n-1,k'}\left(D\right)\leq  P_{n-1,k'}(\B)$. The following chain of inequalities hold:
\begin{align}
P_{n,k}\left(D\right) &\leq P_{n-1,k}\left(D\right) + \bellwin\  \left(P_{n-1,k-1}(D)-P_{n-1,k}(D)\right) \label{eq:F1}\\
                                    &= (1-\bellwin)\ P_{n-1,k}\left(D\right) + \bellwin\ P_{n-1,k-1}(D) \label{eq:F2} \\
                                    &\leq (1-\bellwin)\ P_{n-1,k}\left(\B\right) + \bellwin\ P_{n-1,k-1}(\B)\label{eq:F3}\\
                                    &=P_{n,k}(\B)\ . \label{eq:F4}
\end{align}
The first~\eqref{eq:F1} and second~\eqref{eq:F2} equalities follow after plugging~\eqref{eq:pnkineq} back into \eqref{eq:psumdib} and rearranging. The inequality~\eqref{eq:F3}
follows from the induction hypothesis. The last equality~\eqref{eq:F4} follows from rearranging \eqref{eq:laststep} and completes the induction step.
\end{proof}

\subsubsection{Event-ready schemes creating multiple states}

We now consider arbitrary event-ready schemes in which multiple games can be played as specified by the tag $t$. This is of interest, for example, when multiple Bell states
can be generated at the event-ready station, but we want to use more than one of them. Note that this only makes sense if Alice and Bob can hope to violate all inequalities
using the same measurements, since the event-ready signal is space-like separated from the generation of the random numbers used as inputs to Alice and Bob.
In the Delft experiment~\cite{Hensen_15}, only one Bell state was used, namely the one which is most noise-free resulting in a higher violation.

However, it is possible to use multiple Bell states by playing the standard CHSH game $t=1$ for one Bell state, and one in which we flip the role of the inputs for $t=2$. That is,
instead of taking $x \cdot y = a \oplus b$ as the winning condition, we take $x \cdot y = a \oplus b \oplus 1$. This game has exactly the same winning probability than the
standard CHSH game, meaning that our analysis above applies without change by using a new tag $t'=0$ whenever $t=0$, and setting $t'=1$ for either $t=1$ or $t=2$. Naturally, we then need to apply the relabelling to the outputs before computing the total score.

We remark that is possible to combine games that have different success probabilities, but we are not aware of any situation yet in which this may be beneficial.

\subsection{Conventional analysis}
For completeness, let us now illustrate how the $p$-value for win/lose games compares to statements made in a conventional analysis using standard deviations. In such an analysis it is assumed that there is no memory and that the distribution is Gaussian.
For simplicity, we thereby consider the case where no event-ready scheme is used. It is straightforward to extend to event-ready schemes.
Observe that in win/lose games without the use of an event-ready procedure we can express the $P$-value as
\begin{align}
P\textrm{-value} 
&= \pr{C \geq c}\ , \label{eq:rhs}
\end{align}
where $n$ is the number of trials.
Now, assume that each of the $C_\ellInd$ is i.i.d. (independently and identically distributed) 
and characterized by the probability that it takes the value one
\begin{equation}
\forall i,\ \pr{C_\ellInd=1}=\myprob\ .
\end{equation}
We can now approximate the sum of Bernoulli trials by a Gaussian random variable with mean $n\myprob$ and variance $n\myprob(1-\myprob)$. If the hypothesis holds we can approximate the right hand side of \eqref{eq:rhs}. However, for all win/lose games we have that $\myprob\leq\bellwin$, that is $\bellwin$ is a cap on the probability that $C_\ellInd$ takes the value one. If additionally $c>n\gamma$ we obtain an upper bound on the approximation as follows
\begin{align}
\pr{\sum_{\ellInd=1}^nC_\ellInd\geq c} &\approx Q\left(\frac{c-n\myprob}{\sqrt{n\myprob(1-\myprob)}}\right)\label{eq:firstq}\\
                                            &\leq Q\left(\frac{c-n\bellwin}{\sqrt{n\bellwin(1-\bellwin)}}\right)
\end{align}
where $Q(\cdot)$ denotes the tail probability of the standard normal distribution. 
Observe that the right hand side of \eqref{eq:firstq} is increasing in $\myprob$ (or alternatively $\frac{c-n\myprob}{\sqrt{n\myprob(1-\myprob)}}$ is decreasing in $\myprob$). That is, the inequality follows because $\bellwin$ is the largest possible value of $\myprob$.

\subsection{Relation to the CHSH correlator $\langle CHSH \rangle$}\label{sec:usualCHSH}
For completeness, let us explain how for the example of the CHSH correlator, one can understand the relation between the 
number of wins $c$ obtained in the win/lose game, and the maybe more familiar form of the correlator.
Since our objective is only to illustrate this link and give some intuition on the $p$-values, we assume, only from here and until the end of this section, perfect RNGs. We
will also drop the index $\ellInd$, since we are considering only one trial.
We denote by $\langle XY\rangle_{ab}$ the average of the random variable $XY$ when the settings are $A=a,B=b$
\begin{align}
\langle XY\rangle_{ab} &= \pr{X=Y|A=a,B=b}-\pr{X\neq Y|A=a,B=b}\\
                                     &= \left\{ 
\begin{aligned}
2\pr{C=1|A=a,B=b} - 1 & \qquad\textrm{ if } (a,b)\neq (1,1)\ , \\
1-2\pr{C=1|A=a,B=b}  & \qquad\textrm{ otherwise.}
\end{aligned}
\right.
\end{align}
Let us denote by $S$ the average CHSH value
\begin{align}
S &= \langle CHSH \rangle = \langle XY\rangle_{00}+\langle XY\rangle_{01}+\langle XY\rangle_{10}-\langle XY\rangle_{11}\ .
\end{align}
Now, we can link $S$ with $\pr{C=1}$ as
\begin{align}
\frac{S+4}{8} &= \frac{2\sum_{a,b}\pr{C=1|A=a,B=b}}{8}\\
  &= \sum_{a,b}\frac{1}{4}\pr{C=1|A=a,B=b}\\  
  &= \pr{C=1}\ .
\end{align}
That is, we can map the average CHSH value $S$ to the probability that $C$ takes the value one. It directly follows that the known CHSH upper bound $S\leq 2$ corresponds with $\pr{C=1}\leq 0.75$, which is the upper bound that we obtain if we assume perfect RNGs ($\tau=0$) in Lemma \ref{lem:tool} and Lemma \ref{lem:tool2}. In the same way we can map the observed CHSH violation to the number of successes. Let $n_{a,b}$, $n_{a,b}^{=}$ and $n_{a,b}^{\neq}$ denote 
denote the number of \evs,
the number of wins
and the number of losses associated with setting $(a,b)$ and let $\tilde S$ denote the observed CHSH value: 
\begin{align}
\tilde S &= \sum_{a,b} \frac{n_{a,b}^{=}-n_{a,b}^{\neq}}{n_{a,b}}\ .
\end{align}
For large $n$, the following equalities hold approximately
\begin{align}
n\cdot\frac{\tilde S +4}{8} &= n\cdot\frac{\sum_{a,b} \frac{n_{a,b}^{=}-n_{a,b}^{\neq}}{n_{a,b}}+4}{8}\\
                                      &= n\cdot\frac{\sum_{a,b} \left(2\frac{n_{a,b}^{=}}{n_{a,b}}-1\right)+4}{8}\\
                                      &\approx \sum_{a,b} n_{a,b}^{=}\\
                                      &= c\ . 
\end{align}
The approximation holds since for large $n$ the number of \evs\ at each setting should be approximately $n/4$ and in consequence $n/n_{a,b}\approx 4$. 

\section{Analysis of general games}\label{sec:General}
In this section we deal with games where the score functions $s_{ab|xy}$ take more than two values. We will drop the index $t$ and consider only one game, but it is straightforward to
adapt the following analysis to event-ready schemes as above.
We will use the shorthands
\begin{align}
s_{\max}&=\max_{a,b,x,y}s_{ab|xy}\\
s_{\min}&=\min_{a,b,x,y}s_{ab|xy}
\end{align}
to denote the maximum and minimum values of the score functions.
Recall that since
\begin{align}
s_{ab|xy} &= \frac{s^{xy}_{ab}}{p(x,y)}\ , 
\end{align}
the values of $s_{\max}$ and $s_{\min}$ depend on the distribution $p(x,y)$, if we use imperfect RNGs then bounds on the bias of the random numbers will translate into 
different values of $s_{\max}$ and $s_{\min}$.

In order to obtain a bound on the \pvalues we will proceed as follows. First we state Bentkus' inequality, a concentration bound for bounded martingale difference sequences. Then we show how to construct a martingale sequence with bounded differences for any Bell inequality and finally apply Bentkus' inequality to this martingale sequence.

\subsection{Bentkus' inequality}
The score functions of general games take more than two values, and in full generality they might even take a continuous range of values. However, Bentkus' is given in terms of the tail of a sum of independent and identically distributed Bernoulli random variables or, as we equivalently state it here, in terms of the tail of a binomial distribution. These random variables are discrete and, by definition, can only take a discrete number of values. The gap between both is bridged by an interpolation of the binomial distribution. 
Let us introduce $\mathring{P}_{n,y}\left(\B\right)$ a function that interpolates $P_{n,k}(\B)$ between consecutive values of $k$. We define $\mathring{P}_{n,y}\left(\B\right)$:
\begin{equation}
\mathring{P}_{n,y}\left(\B\right) = \left(P_{n,\lfloor y\rfloor}(\B)\right)^{1- (y - \lfloor y\rfloor)}\left(P_{n,\lceil y\rceil}(\B)\right)^{y - \lfloor y\rfloor}
\end{equation}

\begin{theorem}[Bentkus' inequality (Theorem 1.2 \cite{Bentkus_04})]
Let $M_n$ be a martingale sequence with differences $X_k=M_k-M_{k-1}$ and $M_0=0$. If for $k=1...n$ the differences satisfy the following boundedness condition:
\begin{equation}
\pr{-\alpha_k\leq X_k\leq 1-\alpha_k}=1
\end{equation}
then
\begin{equation}
\pr{M_n\geq x} \leq e \mathring{P}_{n,x+n \gamma}\left(\mathbb B_\gamma \right)
\end{equation}
and $\gamma =\sum_{\ellInd=1}^n \alpha_\ellInd/n$.
\end{theorem}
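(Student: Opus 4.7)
My plan is to reduce the general bounded-difference martingale to a tail bound for a sum of independent Bernoulli variables, then bridge the remaining gap via a smoothed-envelope argument. The starting point is the classical observation that, conditionally on the history, a zero-mean difference $X_k$ supported in $[-\alpha_k, 1-\alpha_k]$ is dominated in the convex order by $B_k - \alpha_k$ with $B_k \sim \mathrm{Bernoulli}(\alpha_k)$: among all random variables with a prescribed mean and range, the two-point distribution on the endpoints maximizes the expectation of every convex test function. Iterating this conditionally across $k = 1, \ldots, n$ lets us replace $M_n$ (inside the expectation of any convex function) by the independent-Bernoulli martingale $\sum_k (B_k - \alpha_k)$, and a separate majorization argument transfers non-uniform $\alpha_k$'s to the common rate $\gamma = \tfrac{1}{n}\sum_k \alpha_k$.

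The next step is to convert a convex-function bound into an upper tail bound. Since the indicator $\mathbf{1}\{u \geq x\}$ is not convex, I would introduce a family of smooth majorants, following Bentkus: for each shape parameter pair $(c,\ell)$ choose $f(u) = c^{-\ell}\bigl((u - x + c)_+\bigr)^\ell$, which is nonnegative, convex, and satisfies $f(u) \geq \mathbf{1}\{u \geq x\}$ for $c>0$. Substituting into the convex comparison established in the previous paragraph gives
\begin{align}
\pr{M_n \geq x} \leq \expt[f(M_n)] \leq \expt[f(\tilde B_n - n\gamma)],
\end{align}
where $\tilde B_n = \sum_k B_k$ is $\mathrm{Binomial}(n,\gamma)$. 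Minimizing the right-hand side over $(c,\ell)$ produces a polynomial-type upper bound on the tail of the centered binomial. The log-linear interpolation $\mathring P_{n,y}(\B)$ built into the statement is precisely what absorbs the fact that the continuous threshold $x + n\gamma$ generically lies strictly between two consecutive integer values of the binomial index.

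The crux, and the main obstacle, is pinning down the exact constant. A naive choice of envelope yields a bound like $\mathrm{const}\cdot \pr{\tilde B_n \geq \lceil x + n\gamma\rceil}$ with a constant that grows with $\ell$, while lowering $\ell$ sharpens the envelope near the threshold at the cost of loosening it elsewhere. Balancing these two effects by an explicit calculus on the ratio of consecutive binomial coefficients is where Bentkus' main lemma shows that the infimum is attained and equals $e$, not a smaller absolute constant. Verifying this last step would require carefully comparing partial-sum tails of the binomial with the polynomial moment $\expt\bigl[\bigl((\tilde B_n - n\gamma - x + c)_+\bigr)^\ell\bigr]$ and exploiting the log-concavity of the binomial mass function; this is the technical heart of the argument and what one would need to replicate to recover the sharp prefactor rather than a looser exponential-moment constant.
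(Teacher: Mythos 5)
There is no in-paper proof to compare against here: the authors state this result as Theorem~1.2 of Bentkus (2004) and use it as a black box, so the only meaningful benchmark is Bentkus' original argument. Your outline correctly reproduces the skeleton of that argument. The conditional chord bound (a convex $f$ lies below the line through $(-\alpha_k,f(-\alpha_k))$ and $(1-\alpha_k,f(1-\alpha_k))$ on the interval, so a conditionally centered difference supported there is dominated in convex order by the centered Bernoulli on the endpoints), the backward iteration using that $u\mapsto\expt f\bigl(u+\sum_{j>k}(B_j-\alpha_j)\bigr)$ stays convex and nondecreasing, and Hoeffding's extremal property replacing unequal parameters $\alpha_k$ by the common rate $\gamma$ are all the right reductions, and they also explain why the bound extends to supermartingales (the majorants are nondecreasing). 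The passage from a convex-function bound to a tail bound via majorants of the indicator is likewise the right device.

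However, the theorem's entire quantitative content lives in the step you explicitly defer: showing that the infimum over the envelope family, evaluated on the $\mathrm{Binomial}(n,\gamma)$ law, is bounded by $e$ times the \emph{log-linearly interpolated} survival function $\mathring{P}_{n,x+n\gamma}(\mathbb{B}_\gamma)$ rather than by some larger constant times the ordinary integer tail. Without that comparison lemma (which in Bentkus' paper rests on a delicate analysis of the smoothed binomial tail and the log-concavity of the binomial mass function), your argument establishes only a bound of the form $\mathrm{const}\cdot\pr{\tilde B_n\geq \lceil x+n\gamma\rceil}$ with an unspecified, generally worse constant, and it does not explain why the geometric interpolation between consecutive integer tails is the correct object on the right-hand side. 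So what you have is an accurate roadmap of Bentkus' proof with its technical heart acknowledged but missing; as a standalone proof of the stated inequality with the sharp prefactor $e$ it is incomplete.
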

That is, Bentkus' inequality states that the probability that a martingale sequence surpasses some value $x$ for any martingale sequence with bounded differences is bounded by the tail of a binomial distribution multiplied by $e$. The inequality also holds if $M_n$ is a sequence of supermartingale differences \cite{Bentkus_06}. 

\subsection{A bounded difference supermartingale sequence}
Let $\{C_\ellInd\}_{\ellInd=1}^n$ denote the scores at each \TT\ of a game (see Section \ref{sec:notation}) and let $\mean\left[\bellF_\ellInd|H_\ellInd=\h_\ellInd\right]$ stand for the mean value of the score function given the history. This mean is nothing more than a Bell inequality and we have that 
\begin{align}
\bellmin \leq \mean\left[\bellF_\ellInd|H_\ellInd=\h_\ellInd\right]\leq\bellmax\ .
\end{align}
Consider the sequence of random variables
\begin{equation}
M_n=\sum_\ellInd^{n}\frac{\bellF_\ellInd-\bellmax}{s_{\max}-s_{\min}}\ ,
\end{equation}
the elements in the sequence correspond to the scores in each \TT\ normalized by $s_{\max}-s_{\min}$ and displaced by 
$\bellmax$. 
We subtract 
$\bellmax$ 
such that $M_n$ is a supermartingale sequence with respect to the sequence $H_n$. This can be readily verified since:
\begin{equation}
\mean[M_n-M_{n-1}|H_n\ldots H_1]=\frac{1}{s_{\max}-s_{\min}}\mean\left[\bellF_n-\bellmax|H_n\ldots H_1\right]\leq 0
\end{equation}

Now let $\{c_\ellInd\}_{\ellInd=1}^n$ denote a sequence of observations which correspond to instances of $\{\bellF_\ellInd\}_{\ellInd=1}^n$. We can evaluate the probability that the sequence of random variables $\{\bellF_\ellInd\}_{\ellInd=1}^n$ would yield values equal or higher than $\{c_\ellInd\}_{\ellInd=1}^n$:
\begin{align}
\pr{\sum_\ellInd^{n}\bellF_\ellInd\geq\sum_\ellInd^{n}c_\ellInd} &= \pr{\sum_\ellInd^{n}\bellF_\ellInd -\bellmax\geq\sum_\ellInd^{n}c_\ellInd-\bellmax}\\   
    &= \pr{\frac{\sum_\ellInd^{n}\bellF_\ellInd -\bellmax}{s_{\max}-s_{\min}}\geq\frac{\sum_\ellInd^{n}c_\ellInd-\bellmax}{s_{\max}-s_{\min}}}\\
    &= \pr{M_n\geq\frac{\sum_i^{n}c_\ellInd-\bellmax}{s_{\max}-s_{\min}}}\\
    &=\pr{M_n\geq x}\ ,
\end{align}
where in the last equation we introduced the shorthand 
\begin{equation}
\label{eq:id}
x := \sum_\ellInd^{n}\frac{c_\ellInd-\bellmax}{s_{\max}-s_{\min}}.
\end{equation}
The martingale difference $M_k-M_{k-1}$ is bounded as follows
\begin{equation}
\frac{s_{\min}-\bellmax}{s_{\max}-s_{\min}}\leq M_k-M_{k-1}\leq \frac{s_{\max}-\bellmax}{s_{\max}-s_{\min}}
\end{equation}
Let us denote the lower bound by
\begin{equation}
-\alpha_k=\frac{s_{\min}-\bellmax}{s_{\max}-s_{\min}}\ ,
\end{equation}
and the upper bound by
\begin{equation}
1-\alpha_k=\frac{s_{\max}-\bellmax}{s_{\max}-s_{\min}}\ .
\end{equation}

\subsubsection{Application of Bentkus' inequality to Bell experiments}
With the supermartingale differences bounded above and below by $1-\alpha_k$ and $-\alpha_k$ respectively we can apply Bentkus' inequality:
\begin{align}
\pr{\sum_\ellInd^{n}\bellF_\ellInd\geq\sum_\ellInd^{n}c_\ellInd} &\leq e \mathring{P}_{n,x+n\gamma}(\mathbb B_\gamma)
\label{eq:bentkusmar} 
\end{align}
Where we use the identification of $x$ from \eqref{eq:id}. 
In order to evaluate the right hand side of \eqref{eq:bentkusmar}, there remains to identify 
$x+n\gamma$. 
It can be evaluated from the observed data:
\begin{align}
x+n\gamma &= x + \sum_{\ellInd=1}^n\frac{\bellmax-s_{\min}}{s_{\max}-s_{\min}}\\
         &= \sum_{\ellInd=1}^{n}\frac{c_\ellInd-\bellmax}{s_{\max}-s_{\min}} + \sum_{\ellInd=1}^n\frac{\bellmax-s_{\min}}{s_{\max}-s_{\min}}\\
         &= \sum_{\ellInd=1}^{n}\frac{c_\ellInd-s_{\min}}{s_{\max}-s_{\min}} \ .
\end{align}
Finally, 
we obtain the following bound by directly applyng Bentkus' inequality 
\begin{align}
\pr{C \geq c} &\leq e\ \mathring{P}_{n,x+n\gamma}(\mathbb B_{\hat \gamma})
\label{eq:bentkusmarfinal} 
\end{align}

\end{document}